\newtheorem{theorem}{Theorem}[section]
\newtheorem{lemma}[theorem]{Lemma}
\newtheorem{remark}[theorem]{Remark}
\newtheorem{proposition}[theorem]{Proposition}
\newcommand{\ba}{\boldsymbol{a}}
\newcommand{\be}{\boldsymbol{e}}
\newcommand{\bu}{\boldsymbol{u}}
\newcommand{\bv}{\boldsymbol{v}}
\newcommand{\bw}{\boldsymbol{w}}
\newcommand{\bx}{\boldsymbol{x}}
\newcommand{\by}{\boldsymbol{y}}
\newcommand{\bz}{\boldsymbol{z}}
\newcommand{\bH}{\boldsymbol{H}}
\newcommand{\bX}{\boldsymbol{X}}
\newcommand{\bV}{\boldsymbol{V}}
\newcommand{\bW}{\boldsymbol{W}}
\newcommand{\bbeta}{\boldsymbol{\beta}}
\newcommand{\bgamma}{\boldsymbol{\gamma}}
\newcommand{\bOmega}{\boldsymbol{\Omega}}
\newcommand{\bSigma}{\boldsymbol{\Sigma}}
\title{\bf Path Following and Empirical Bayes Model Selection for Sparse Regression}
\date{}
\author{
Hua Zhou \\
Department of Statistics \\
North Carolina State University \\
Raleigh, NC 27695-8203 \\
\texttt{hua\_zhou@ncsu.edu} \\
\and
Artin Armagan  \\
SAS Institute, Inc\\
Cary, NC 27513 \\
\texttt{artin.armagan@sas.com} \\
\and
David B. Dunson \\
Department of Statistical Science\\
Duke University\\
Durham, NC 27708 \\
\texttt{dunson@stat.duke.edu} \\
}
\begin{document}
\maketitle

\baselineskip=20pt

\begin{abstract}
In recent years, a rich variety of regularization procedures have been proposed for high dimensional regression problems.  However, tuning parameter choice and computational efficiency in ultra-high dimensional problems remain vexing issues.  The routine use of $\ell_1$ regularization is largely attributable to the computational efficiency of the LARS algorithm, but similar efficiency for better behaved penalties has remained elusive. In this article, we propose a highly efficient path following procedure for combination of any convex loss function and a broad class of penalties. From a Bayesian perspective, this algorithm rapidly yields maximum a posteriori estimates at different hyper-parameter values. To bypass the inefficiency and potential instability of cross validation, we propose an empirical Bayes procedure for rapidly choosing the optimal model and corresponding hyper-parameter value. This approach applies to any penalty that corresponds to a proper prior distribution on the regression coefficients. While we mainly focus on sparse estimation of generalized linear models, the method extends to more general regularizations such as polynomial trend filtering after reparameterization. The proposed algorithm scales efficiently to large $p$ and/or $n$. Solution paths of 10,000 dimensional examples are computed within one minute on a laptop for various generalized linear models (GLM). Operating characteristics are assessed through simulation studies and the methods are applied to several real data sets.

\vspace{.1in}
{\bf Keywords:} Generalized linear model (GLM); Lasso; LARS; Maximum a posteriori estimation; Model selection; Non-convex penalty; Ordinary differential equation (ODE); Regularization; Solution path.
\vspace{.1in}
\end{abstract}

\section{Introduction}
\label{Introduction}

Sparse estimation via regularization has become a prominent research area over the last decade finding interest across a broad variety of disciplines. Much of the attention was brought by lasso regression \citep{Tibshirani96Lasso}, which is simply $\ell_1$ regularization, where $\ell_{\eta}$ is the $\eta$-norm of a vector for $\eta>0$. It was not until the introduction of the LARS algorithm \citep{EfronHastieIainTibshirani04LARS} that lasso became so routinely used. This popularity is attributable to the excellent computational performance of LARS, a variant of which obtains the whole solution path of the lasso at the cost of an ordinary least squares estimation. Unfortunately lasso has some serious disadvantages in terms of estimation bias and model selection behavior. There has been a rich literature analyzing lasso and its remedies \citep{Fu98BridgeLasso,KnightFu00LassoAsymptotics,FanLi01SCAD,YuanLin05EmpiricalBayes,ZouHastie05Enet,ZhaoYu06LassoConsistency,Zou06AdaptiveLasso,MeinshausenBuhlmann06GraphLasso,ZouLi08Onestep,ZhangHuang08LassoBias}.

Motivated by disadvantages of $\ell_1$ regularization, some non-convex penalties are proposed which, when designed properly, reduce bias in large signals while shrinking noise-like signals to zero \citep{FanLi01SCAD,Candes08ReWtL1,Friedman08GPS,Armagan09Bridge,Zhang10MCP,Armagan11Pareto}. However, non-convex regularization involves difficult non-convex optimization. As a convex loss function plus concave penalties is a difference of two convex functions, an iterative algorithm for estimation at a fixed regularization parameter can be constructed by the majorization-minimization principle \citep{Lange10NumAnalBook}. At each iteration, the penalty is replaced by the supporting hyperplane tangent at the current iterate. As the supporting hyperplane majorizes the concave penalty function, minimizing the convex loss plus the linear majorizing function (an $\ell_1$ regularization problem) produces the next iterate, which is guaranteed to decrease the original penalized objective function. Many existing algorithms for estimation with concave penalties fall into this category \citep{FanLi01SCAD,HunterLi05MM,ZouLi08Onestep,Candes08ReWtL1,Armagan11Pareto}. Although being numerically stable and easy to implement, their (often slow) convergence to a local mode makes their performance quite sensitive to starting values in settings where $p>>n$. Coordinate descent is another algorithm for optimization in sparse regression at a fixed tuning parameter value and has found success in ultra-high dimensional problems \citep{Friedman07Coordinate,WuLange08Coordinate,FriedmanHastieTibshirani10GLMCD,Mazumder11SparseNet}.  Nevertheless the optimization has to be performed at a large number of grid points, making the computation rather demanding compared to path following algorithms such as LARS. For both algorithms the choice of grid points is tricky. When there are too few, important events along the path are missed; when there are too many, computation becomes expensive.

The choice of tuning parameter is another important challenge in regularization problems. Cross-validation is widely used but incurs considerable computational costs. An attractive alternative is to select the tuning parameter according to a model selection criterion such as the Akaike information criterion (AIC) \citep{Akaike74AIC} or the Bayesian information criterion (BIC) \citep{Schwartz78BIC}. These criteria choose the tuning parameter minimizing the negative log-likelihood penalized by the model dimension. In shrinkage estimation, however, the degrees of freedom is often unclear. Intriguing work by \cite{WangLeng07Lasso} and \cite{Wang07SCADBIC} extend BIC to be used with shrinkage estimators. However, the meaning of BIC as an empirical Bayes procedure is lost in such extensions. \citet{ZhangHuang08LassoBias} study the properties of generalized information criterion (GIC) in a similar context.

In this paper, we address these two issues for the general regularization problem
\begin{align}
    \min_{\bbeta} \, f(\bbeta) + \sum_{j \in {\cal S}} P_\eta(|\beta_j|, \rho), \label{eqn:pen-obj}
\end{align}
where $f$ is a twice differentiable convex loss function, ${\cal S} \subset \{1,\ldots,p\}$ indicates the subset of coefficients being penalized (${\cal S}$ for shrink), and $P_\eta(t,\rho)$ is a scalar penalty function. Here $\rho$ is the penalty tuning parameter and $\eta$ represents possible parameter(s) for a penalty family. Allowing a general penalization set ${\cal S}$ increases the applicability of the method, as we will see in Section \ref{sec:gen-reg}. Throughout this article we assume the following regularity conditions on the penalty function $P_\eta(t,\rho)$: (i) symmetric about 0 in $t$, (ii) $P_\eta(0,\rho)>-\infty$, for all $\rho \ge 0$, (iii) monotone increasing in $\rho \ge 0$ for any fixed $t$, (iv) non-decreasing in $t \ge 0$ for any fixed $\rho$, (v) first two derivatives with respect to $t$ exist and are finite.

The generality of (\ref{eqn:pen-obj}) is two-fold. First, $f$ can be any convex loss function. For least squares problems, $f(\bbeta)=\|\by - \bX \bbeta\|_2^2/2$. For generalized linear models (GLMs), $f$ is the negative log-likelihood function. For Gaussian graphical models, $f(\bOmega) = - \log \det |\bOmega| + \text{tr}(\hat \bSigma \bOmega)$, where $\hat \bSigma$ is a sample covariance matrix and the parameter $\bOmega$ is a precision matrix. Secondly, most commonly used penalties satisfy the aforementioned assumptions. These include but are not limited to
\begin{enumerate}
\item Power family \citep{FrankFriedman93Bridge}
\begin{align*}
    P_\eta(|\beta|,\rho) &= \rho |\beta|^\eta, \hspace{.2in} \eta \in (0,2].
\end{align*}
Varying $\eta$ from 2 bridges best subset regression to lasso ($l_1$) \citep{Tibshirani96Lasso,ChenDonohoSaunders01BasisPursuit} to ridge ($l_2$) regression \citep{HoerlKennard70Ridge}.
\item Elastic net \citep{ZouHastie05Enet}
\begin{align*}
    P_\eta(|\beta|, \rho) &= \rho [(\eta-1) \beta^2/2 + (2-\eta) |\beta|], \hspace{.2in} \eta \in [1,2].
\end{align*}
Varying $\eta$ from 1 to 2 bridges lasso regression to ridge regression.
\item Log penalty \citep{Candes08ReWtL1,Armagan11Pareto}
\begin{align*}
    P_\eta(|\beta|, \rho) &= \rho \ln (\eta + |\beta|), \hspace{.2in} \eta>0.
\end{align*}
This penalty was called generalized elastic net in \citet{Friedman08GPS} and log penalty in \citet{Mazumder11SparseNet}. Such a penalty is induced by a generalized Pareto prior thresholded and folded at zero, with the oracle properties studied by \citet{Armagan11Pareto}.
\item Continuous log penalty \citep{Armagan11Pareto}
\begin{align*}
    P(|\beta|, \rho) &= \rho \ln (\sqrt{\rho} + |\beta|).
\end{align*}
This version of log penalty was designed to guarantee continuity of the solution path when the design matrix is scaled and orthogonal \citep{Armagan11Pareto}.
\item The SCAD penalty \citep{FanLi01SCAD} is defined via its partial derivative
\begin{align*}
    \frac{\partial}{\partial |\beta|} P_{\eta}(|\beta|, \rho) &= \rho \left\{ 1_{\{|\beta| \le \rho\}} + \frac{(\eta \rho - |\beta|)_+}{(\eta-1)\rho} 1_{\{|\beta| > \rho\}} \right\}, \hspace{.2in} \eta>2.
\end{align*}
Integration shows SCAD as a natural quadratic spline with knots at $\rho$ and $\eta \rho$
\begin{align}
    P_{\eta}(|\beta|, \rho) &= \begin{cases}
        \rho|\beta| & |\beta|<\rho  \\
        \rho^2 + \frac{\eta \rho(|\beta|-\rho)}{\eta-1} - \frac{\beta^2-\rho^2}{2(\eta-1)}   & |\beta| \in [\rho, \eta\rho]   \\
        \rho^2 (\eta+1)/2  & |\beta|>\eta\rho
    \end{cases}.    \label{eqn:SCAD-density}
\end{align}
For small signals $|\beta|<\rho$, it acts as lasso; for larger signals $|\beta|>\eta \rho$, the penalty flattens and leads to the unbiasedness of the regularized estimate.
\item Similar to SCAD is the MC+ penalty \citep{Zhang10MCP} defined by the partial derivative
\begin{align*}
    \frac{\partial}{\partial |\beta|} P_{\eta}(|\beta|, \rho) &= \rho \left(1 - \frac{|\beta|}{\rho \eta} \right)_+.
\end{align*}
Integration shows that the penalty function
\begin{align}
    P_{\eta}(|\beta|, \rho) &= \left(\rho|\beta|-\frac{\beta^2}{2\eta} \right) 1_{\{|\beta|<\rho \eta\}} + \frac{\rho^2 \eta}{2} 1_{\{|\beta| \ge \rho \eta\}}, \hspace{.2in} \eta>0,    \label{eqn:MCP-density}
\end{align}
is quadratic on $[0,\rho\eta]$ and flattens beyond $\rho \eta$.
Varying $\eta$ from 0 to $\infty$ bridges hard thresholding ($\ell_0$ regression) to lasso ($\ell_1$) shrinkage.
\end{enumerate}
The derivatives of penalty functions will be frequently used for the development of the path algorithm and model selection procedure. They are listed in Table \ref{table:pen} of Supplementary Materials for convenience.

Our contributions are summarized as follows:
\begin{enumerate}
\item We propose a general path seeking strategy for the sparse regression framework (\ref{eqn:pen-obj}). To the best of our knowledge, no previous work exists at this generality, except the generalized path seeking (GPS) algorithm proposed in unpublished work by \cite{Friedman08GPS}. Some problems with the GPS algorithm motivated us to develop a more rigorous algorithm, which is fundamentally different from GPS.  Path algorithms for some specific combinations of loss and penalty functions have been studied before. Homotopy \citep{OsbornePresnellTurlach00LassoAlgo} and a variant of LARS \citep{EfronHastieIainTibshirani04LARS} compute the piecewise linear solution path of $\ell_1$ penalized linear regression efficiently. \citet{ParkHastie07GLMLasso} proposed an approximate path algorithm for $\ell_1$ penalized GLMs. A similar problem was considered by \citet{Wu10ODELasso} who devises a LARS algorithm for GLMs based on ordinary differential equations (ODEs). The ODE approach naturally fits problems with piecewise smooth solution paths and is the strategy we adopt in this paper. All of the aforementioned work deals with $\ell_1$ regularization which leads to convex optimization problems. Moving from convex to non-convex penalties improves the quality of the estimates but imposes great difficulties in computation. The PLUS path algorithm of \citep{Zhang10MCP} is able to track all local minima; however, it is specifically designed for the least squares problem with an MC+ penalty and does not generalize to (\ref{eqn:pen-obj}).

\item We propose an empirical Bayes procedure for the selection of a good model and the implied hyper/tuning parameters along the solution path. This method applies to any likelihood model with a penalty that corresponds to a proper shrinkage prior in the Bayesian setting. We illustrate the method with the power family (bridge) and log penalties which are induced by the exponential power and generalized double Pareto priors respectively. The regularization procedure resulting from the corresponding penalties is utilized as a model-search engine where each model and estimate along the path is appropriately evaluated by a criterion emerging from the prior used. \citet{YuanLin05EmpiricalBayes} took a somewhat similar approach in the limited setting of $\ell_1$ penalized linear regression.

\item The proposed path algorithm and empirical Bayes model selection procedure extend to a large class of generalized regularization problems such as polynomial trend filtering. Path algorithms for generalized $\ell_1$ regularization was recently studied by \citet{TibshiraniTaylor10GenLasso} and \citet{ZhouLange11LSPath} for linear regression and by \citet{ZhouWu11EPSODE} for general convex loss functions. Using non-convex penalties in these general regularization problems produces more parsimonious and  less biased estimates. Re-parameterization reformulates these problems as in
 (\ref{eqn:pen-obj}) which is solved by our efficient path algorithm.

\item A {\sc Matlab} toolbox for sparse regression is released on the first author's web site. The code for all examples in this paper is available on the same web site, observing the principle of reproducible research.
\end{enumerate}

The remainder of the article is organized as follows. The path following algorithm is derived in Section \ref{sec:path-algo}. The empirical Bayes criterion is developed in Section \ref{sec:emp-bayes} and illustrated for the power family and log penalties. Section \ref{sec:gen-reg} discusses extensions to generalized regularization problems. Various numerical examples are presented in Section \ref{sec:examples}. Finally we conclude with a discussion and future directions.

\section{Path Following for Sparse Regressions}
\label{sec:path-algo}

For a parameter vector $\bbeta \in \mathbb{R}^p$, we use ${\cal S}_{\cal Z}(\bbeta)=\{j \in {\cal S}: \beta_j=0\}$ to denote the set of penalized parameters that are zero and correspondingly ${\cal S}_{\bar {\cal Z}}(\bbeta) = \{j \in {\cal S}: \beta_j \ne 0\}$ is the set of nonzero penalized parameters. ${\cal A} = \bar {\cal S} \cup {\cal S}_{\bar {\cal Z}}$ indexes the current active predictors with unpenalized or nonzero penalized coefficients. It is convenient to define the Hessian,  $\bH_{\cal A}(\bbeta, \rho) \in \mathbb{R}^{|{\cal A}| \times |{\cal A}|}$, of the penalized objective function (\ref{eqn:pen-obj}) restricted to the active predictors with entries
\begin{align}
     H_{jk} &= \begin{cases}
     [d^2f(\bbeta)]_{jk} & j \in \bar {\cal S} \\
     [d^2f(\bbeta)]_{jk} + \frac{\partial^2P(|\beta_j|,\rho)}{\partial|\beta_j|^2} 1_{\{j=k\}} & j \in {\cal S}_{\bar {\cal Z}}   
     \end{cases}.   \label{eqn:HA}
\end{align}
Our path following algorithm revolves around the necessary condition for a local minimum. We denote the penalized objective function (\ref{eqn:pen-obj}) by $h(\bbeta)$ throughout this article.
\begin{lemma}[Necessary optimality condition]
If $\bbeta$ is a local minimum of (\ref{eqn:pen-obj}) at tuning parameter value $\rho$, then $\bbeta$ satisfies the stationarity condition
\begin{align}
    \nabla_j f(\bbeta) + \frac{\partial P(|\beta_j|,\rho)}{\partial |\beta_j|} \omega_j 1_{\{j \in {\cal S}\}} = 0, \hspace{.2in} j=1,\ldots,p,   \label{eqn:stationarity}
\end{align}
where the coefficients $\omega_j$ satisfy
\begin{align*}
   \omega_j \in \begin{cases}
   \{-1\} & \beta_j < 0 \\
   [-1, 1] & \beta_j = 0    \\
   \{1\} & \beta_j > 0
   \end{cases}.
\end{align*}
Furthermore, $\bH_{\cal A}(\bbeta,\rho)$ is positive semidefinite.
\end{lemma}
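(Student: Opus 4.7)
The plan is to separate the first-order (stationarity) claim from the second-order (positive semidefiniteness) claim, and within the first-order analysis to split into three cases according to whether coordinate $j$ is unpenalized, nonzero and penalized, or zero and penalized. In all cases I rely on the fact that $h(\bbeta)=f(\bbeta)+\sum_{j\in{\cal S}}P_\eta(|\beta_j|,\rho)$ is a sum of a smooth function and a finite number of univariate terms $P_\eta(|\beta_j|,\rho)$, each of which is smooth on $\mathbb{R}\setminus\{0\}$ and has well-defined one-sided derivatives at $0$ equal to $\pm \partial P(0,\rho)/\partial|\beta_j|$ by the regularity assumptions.

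For the stationarity condition I perturb one coordinate at a time. If $j\in\bar{\cal S}$, then $\beta_j\mapsto h(\bbeta)$ is smooth at the local minimum, so the ordinary derivative vanishes, giving $\nabla_j f(\bbeta)=0$; this is the stated condition with the indicator being zero. If $j\in{\cal S}_{\bar{\cal Z}}(\bbeta)$, then $|\beta_j|$ is smooth in a neighborhood of $\beta_j\neq 0$ with derivative $\mathrm{sign}(\beta_j)$, and again the usual first-order condition yields $\nabla_jf(\bbeta)+(\partial P/\partial|\beta_j|)\,\mathrm{sign}(\beta_j)=0$, which matches $\omega_j\in\{\pm 1\}$. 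If $j\in{\cal S}_{\cal Z}(\bbeta)$, I use the one-sided directional derivatives $D_{\pm e_j}h(\bbeta)\ge 0$, which give
\begin{align*}
\nabla_j f(\bbeta)+\frac{\partial P(0,\rho)}{\partial|\beta_j|}\ge 0,\qquad -\nabla_j f(\bbeta)+\frac{\partial P(0,\rho)}{\partial|\beta_j|}\ge 0,
\end{align*}
i.e.\ $|\nabla_j f(\bbeta)|\le \partial P(0,\rho)/\partial|\beta_j|$, which is equivalent to the existence of $\omega_j\in[-1,1]$ satisfying the stationarity identity.

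For the Hessian claim I restrict attention to the active coordinates in ${\cal A}=\bar{\cal S}\cup{\cal S}_{\bar{\cal Z}}(\bbeta)$. For every $j\in{\cal A}$ we have either no penalty or $\beta_j\neq 0$, so $P_\eta(|\beta_j|,\rho)$ is twice continuously differentiable in a neighborhood of the current $\beta_j$ with second derivative $\partial^2 P(|\beta_j|,\rho)/\partial|\beta_j|^2$ (the $1_{\{j=k\}}$ reflects the single-coordinate nature of the penalty). Freezing $\beta_j=0$ for $j\in{\cal S}_{\cal Z}(\bbeta)$, the resulting restricted function of the active coordinates is $C^2$ near $\bbeta_{\cal A}$ with Hessian exactly $\bH_{\cal A}(\bbeta,\rho)$ as defined in (\ref{eqn:HA}). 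Since $\bbeta_{\cal A}$ remains a local minimizer of this restricted smooth function (any nearby perturbation of the active coordinates only is a valid perturbation of the original problem), the classical second-order necessary condition yields $\bH_{\cal A}(\bbeta,\rho)\succeq 0$.

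The only delicate step is the one-sided-derivative argument at a zero coordinate $j\in{\cal S}_{\cal Z}$; the other parts are standard first- and second-order necessary conditions for smooth optimization. A minor subtlety worth verifying is that restricting to ${\cal A}$ and fixing the off-active coordinates at zero genuinely produces a local minimizer of the restricted problem, which follows immediately because the set of admissible perturbations is a subset of those for the original unconstrained problem.
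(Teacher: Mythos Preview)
Your proof is correct and follows essentially the same approach as the paper: coordinate-wise directional derivatives at the zero penalized coordinates yield the $\omega_j\in[-1,1]$ condition, smoothness at the remaining coordinates gives the usual first-order condition, and restricting to ${\cal A}$ reduces the second-order claim to the classical $C^2$ necessary condition. The only cosmetic difference is that the paper first disposes of the convex-penalty case by citing a standard convex-optimization result before doing the coordinate-wise argument for non-convex $P$, whereas you treat all penalties uniformly; your version is arguably cleaner and also makes the unpenalized case $j\in\bar{\cal S}$ explicit.
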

\begin{proof}
When the penalty function $P$ is convex, this is simply the first order optimality condition for unconstrained convex minimization \citep[Theorem 3.5]{Ruszczynski06Book}. When $P$ is non-convex, we consider the optimality condition coordinate-wise. For $j\in\{j:\beta_j \ne 0\}$, this is trivial. When $\beta_j=0$, $\beta_j$ being a local minimum implies that the two directional derivatives are non-negative. Then
\begin{align*}
    d_{\be_j} h(\bbeta) &= \lim_{t \downarrow 0} \frac{h(\bbeta+t \be_j)-h(\bbeta)}{t} = \nabla_j f(\bbeta) + \frac{\partial P(|\beta_j|,\rho)}{\partial |\beta_j|} \ge 0    \\
    d_{-\be_j} h(\bbeta) &= \lim_{t \uparrow 0} \frac{h(\bbeta+t \be_j)-h(\bbeta)}{t} = - \nabla_j f(\bbeta) + \frac{\partial P(|\beta_j|,\rho)}{\partial |\beta_j|} \ge 0,
\end{align*}
which is equivalent to (\ref{eqn:stationarity}) with $\omega_j \in [-1,1]$. Positive semidefiniteness of $\bH_{\cal A}$ follows from the second order necessary optimality condition when restricted to coordinates in ${\cal A}$.
\end{proof}
\noindent
We call any $\bbeta$ satisfying (\ref{eqn:stationarity}) a \emph{stationary point} at $\rho$. Our path algorithm tracks a stationary point along the path while sliding $\rho$ from infinity towards zero. When the penalized objective function $h$ is convex, e.g., $\eta \in [1,2]$ regime of the power family, elastic net, or $d^2h$ is positive semidefinite, the stationarity condition (\ref{eqn:stationarity}) is sufficient for a global minimum. When $h$ is not convex, the minimization problem is both non-smooth and non-convex and there lacks an easy-to-check sufficient condition for optimality. The most we can claim is that the directional derivatives at any stationary point are non-negative.
\begin{lemma}
\label{lemma:dir-deriv}
Suppose $\bbeta$ satisfies the stationarity condition (\ref{eqn:stationarity}). Then all directional derivatives at $\bbeta$ are non-negative, i.e.,
\begin{eqnarray}
    d_{\bv} h(\bbeta) = \lim_{t \downarrow 0} \frac{h(\bbeta+t\bv)-h(\bbeta)}{t} \ge 0    \label{eqn:dir-derivative}
\end{eqnarray}
for any $\bv \in \mathbb{R}^p$. Furthermore, if the penalized objective function $h$ is convex, then $\bbeta$ is a global minimum.
\end{lemma}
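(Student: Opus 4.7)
The plan is to compute the one-sided directional derivative $d_{\bv} h(\bbeta)$ term-by-term by splitting the coordinates into the three disjoint groups $\bar{\cal S}$, ${\cal S}_{\bar{\cal Z}}(\bbeta)$, and ${\cal S}_{\cal Z}(\bbeta)$, and then showing that stationarity makes each contribution non-negative. Since $f$ is smooth, its contribution is $\sum_j \nabla_j f(\bbeta) v_j$. For the penalty term, I would argue coordinate-wise: if $\beta_j \ne 0$, then $t \mapsto P(|\beta_j+tv_j|,\rho)$ is smooth at $t=0$ with derivative $\operatorname{sgn}(\beta_j)\,P'(|\beta_j|,\rho)\,v_j$, while if $\beta_j = 0$ the right derivative at $t=0$ is $P'(0,\rho)\,|v_j|$ (regularity assumption (v) ensures this one-sided derivative exists and is finite). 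Combining these pieces gives
\begin{align*}
d_{\bv} h(\bbeta) \;=\; \sum_{j \in \bar{\cal S} \cup {\cal S}_{\bar{\cal Z}}} \!\bigl[\nabla_j f(\bbeta) + \omega_j\, P'(|\beta_j|,\rho)\, 1_{\{j \in {\cal S}\}}\bigr] v_j \;+\; \sum_{j \in {\cal S}_{\cal Z}} \bigl[\nabla_j f(\bbeta)\, v_j + P'(0,\rho)\,|v_j|\bigr].
\end{align*}

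The next step is to use the stationarity condition (\ref{eqn:stationarity}) on each group. For $j \in \bar{\cal S} \cup {\cal S}_{\bar{\cal Z}}$, the bracketed coefficient in the first sum is identically zero, killing those terms. For $j \in {\cal S}_{\cal Z}$, the condition gives $\nabla_j f(\bbeta) = -\omega_j P'(0,\rho)$ for some $\omega_j \in [-1,1]$, so $|\nabla_j f(\bbeta)| \le P'(0,\rho)$. This implies $\nabla_j f(\bbeta)\,v_j \ge -|\nabla_j f(\bbeta)|\,|v_j| \ge -P'(0,\rho)\,|v_j|$, making each summand of the second sum non-negative. Hence $d_{\bv} h(\bbeta) \ge 0$, proving the first claim.

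For the convex case, I would invoke the standard fact that for a convex function $h$, the difference quotient $t \mapsto [h(\bbeta+t\bv) - h(\bbeta)]/t$ is monotone non-decreasing in $t>0$, so its limit as $t \downarrow 0$ satisfies $d_{\bv} h(\bbeta) \le h(\bbeta+\bv) - h(\bbeta)$. Combined with $d_{\bv} h(\bbeta) \ge 0$ for every $\bv \in \mathbb{R}^p$, this immediately yields $h(\bbeta+\bv) \ge h(\bbeta)$ everywhere, i.e., $\bbeta$ is a global minimum.

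The main obstacle, such as it is, is simply being careful about the one-sided nature of the derivative at penalized coordinates where $\beta_j = 0$: there $P'(0,\rho)$ must be interpreted as the right derivative (as in the $\ell_1$ case), and one must check that the subgradient slack $\omega_j \in [-1,1]$ absorbed into the stationarity condition is exactly what converts $\nabla_j f(\bbeta)\,v_j$ into a quantity dominated by $P'(0,\rho)|v_j|$ in absolute value. Once this bookkeeping is done the rest is a direct consequence of the coordinate-wise nonnegativity and of the elementary monotonicity property of convex difference quotients.
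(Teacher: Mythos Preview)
Your proof is correct and follows essentially the same route as the paper's: you compute $d_{\bv} h(\bbeta)$ by splitting coordinates into $\bar{\cal S}\cup{\cal S}_{\bar{\cal Z}}$ and ${\cal S}_{\cal Z}$, use stationarity to kill the former and bound the latter termwise via $|\nabla_j f(\bbeta)|\le P'(0,\rho)$, and then invoke monotonicity of convex difference quotients for the global-minimum claim. The paper's argument is organized identically, with only cosmetic differences in how the surviving ${\cal S}_{\cal Z}$ sum is written before concluding nonnegativity.
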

\begin{proof}
By definition of directional derivative and the stationarity condition (\ref{eqn:stationarity}),
\begin{eqnarray*}
    d_{\bv} h(\bbeta) &=& d_{\bv} f(\bbeta) + \sum_{j \in {\cal S}: \beta_j \ne 0} v_j \left. \frac{\partial P_\eta(t,\rho)}{\partial t} \right|_{t=|\beta_j|} \mathrm{sgn}(\beta_j) + \sum_{j \in {\cal S}: \beta_j = 0} |v_j| \left. \frac{\partial P_\eta(t,\rho)}{\partial t} \right|_{t=0}   \\
    &=& \sum_j v_j \nabla_j f(\bbeta) + \sum_{j \in {\cal S}: \beta_j \ne 0} v_j \left. \frac{\partial P_\eta(t,\rho)}{\partial t} \right|_{t=|\beta_j|} \mathrm{sgn}(\beta_j) + \sum_{j \in {\cal S}: \beta_j = 0} |v_j| \left. \frac{\partial P_\eta(t,\rho)}{\partial t} \right|_{t=0}   \\
    &=& \sum_{j \notin {\cal A}} |v_j| \left( \mathrm{sgn}(v_j) \cdot \nabla_j f(\bbeta) + \left. \frac{\partial P_\eta(t,\rho)}{\partial t} \right|_{t=0} \right)   \\
    &\ge& 0.
\end{eqnarray*}
Consider the scalar function $g(t) = h(\bbeta+t \bv)$. Convexity of $h$ implies that $g$ is convex too. Then the chord $
[g(t)-g(0)]/t = [h(\bbeta+t\bv) - h(\bbeta)]/t$ is increasing for $t \ge 0$. Thus $h(\bbeta+\bv) - h(\bbeta) \ge d_{\bv} h(\bbeta) \ge 0$ for all $\bv$, verifying that $\bbeta$ is a global minimum.
\end{proof}

We remark that, without convexity, non-negativeness of all directional derivatives does {\it not} guarantee local minimality, as demonstrated in the following example \citep[Exercise 1.12]{Lange04Optm}. Consider the bivariate function $f(x,y) = (y-x^2)(y-2x^2)$. Any directional derivative at origin (0,0) is non-negative since $\lim_{t \to 0} [f(ht,kt) - f(0,0)]/t = 0 $ for any $h,k \in \mathbb{R}$ and indeed, (0,0) is a local minimum along any line passing through it. However (0,0) is not a local minimum for $f$ as it is easy to see that $f(t,ct^2)<0$ for any $1<c<2$, $t \ne 0$, and that $f(t,ct^2)>0$ for any $c<1$ or $c>2$, $t \ne 0$. Figure \ref{fig:func} demonstrates how we go down hill along the parabola $y = 1.4x^2$ as we move away from $(0,0)$. In this article, we abuse terminology by the use of \emph{solution path} and in fact mean \emph{path of stationarity points}.
\begin{figure}[ht!]
\centering\includegraphics[width=4.5in]{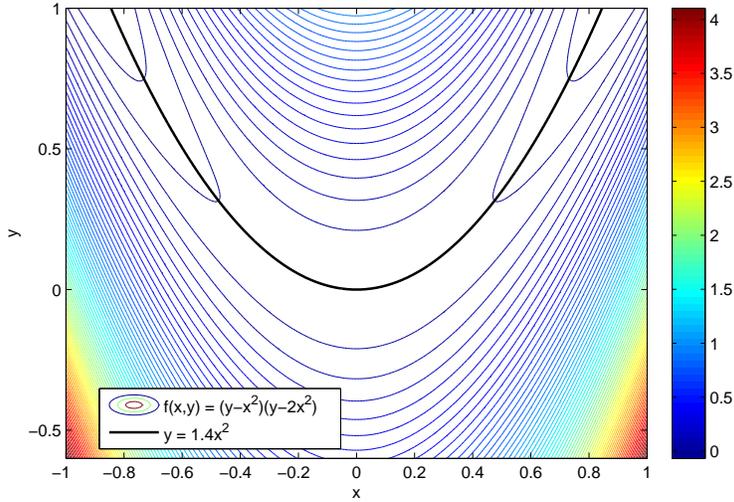}
\caption{Contours of $f(x,y)=(y-x^2)(y-2x^2)$ and the parabola $y = 1.4x^2$ that passes through $(0,0)$, which is not a local minimum although all directional derivatives at (0,0) are nonnegative.}
\label{fig:func}
\end{figure}

\subsection{Least squares with orthogonal design}
\label{subsec:thresholding}

Before deriving the path algorithm for the general sparse regression problem (\ref{eqn:pen-obj}), we first investigate a simple case: linear regression with orthogonal design, i.e., $f(\bbeta) = \|\by - \bX \bbeta\|_2^2/2$ where $\bx_j^t \bx_k = \|\bx_j\|_2^2 1_{\{j=k\}}$. This serves two purposes. First it illustrates the difficulties (discontinuities, local minima) of path seeking with non-convex penalties. Secondly, the thresholding operator for orthogonal design is the building block of the coordinate descent algorithm \citep{Friedman07Coordinate,WuLange08Coordinate,FriedmanHastieTibshirani10GLMCD,Mazumder11SparseNet} or iterative thresholding, which we rely on to detect the discontinuities in path following for the non-convex case.

For linear regression with orthogonal design, the penalized objective function in (\ref{eqn:pen-obj}) can be written in a component-wise fashion and the path solution is
\begin{align}
    \hat \beta_j(\rho) = \text{argmin}_{\beta} \, \frac{a_j}{2} (\beta - b_j)^2 + P_\eta(|\beta|,\rho) \label{eqn:thresholding}
\end{align}
where $a_j = \bx_j^t \bx_j = \|\bx_j\|_2^2$ and $b_j = \bx_j^t \by / \bx_j^t \bx_j$. The solution to (\ref{eqn:thresholding}) for some popular penalties is listed in Supplementary Materials. Similar derivations can be found in \citep{Mazumder11SparseNet}. Existing literature mostly assumes that the design matrix is standardized, i.e., $a_j=1$. As we argue in forthcoming examples, in many applications it is prudent not to do so. Figure \ref{fig:DP-illustration} depicts the evolution of the penalized objective function with varying $\rho$ and the solution path for $a_j=b_j=1$ and the log penalty $P_\eta(|\beta|,\rho) = \rho \ln (|\beta|+\eta)$ with $\eta=0.1$. At $\rho=0.2191$, the path solution jumps from a local minimum 0 to the other positive local minimum.

For the least squares problem with a non-orthogonal design, the coordinate descent algorithm iteratively updates $\beta_j$ by (\ref{eqn:thresholding}). When updating $\beta_j$ keeping other predictors fixed, the objective function takes the same format with $a_j = \|\bx_j\|_2^2$ and $b_j=\bx_j^t(\by-\bX_{-j}\bbeta_{-j})/\bx_j^t\bx_j$, where $\bX_{-j}$ and $\bbeta_{-j}$ denote the design matrix and regression coefficient vector without the $j$-th covariate. For a general twice differentiable loss function $f$, we approximate the smooth part $f$ by its Taylor expansion around current iterate $\bbeta^{(t)}$
\begin{align*}
    f(\bbeta) \approx f(\bbeta^{(t)}) + d f(\bbeta^{(t)}) (\bbeta-\bbeta^{(t)}) + \frac 12 (\bbeta-\bbeta^{(t)})^t d^2f(\bbeta^{(t)}) (\bbeta-\bbeta^{(t)})
\end{align*}
and then apply thresholding formula (\ref{eqn:thresholding}) for $\beta_j$ sequentially to obtain the next iterate $\bbeta^{(t+1)}$.

\begin{figure}
\begin{center}
$$
\begin{array}{cc}
\includegraphics[width=2.75in]{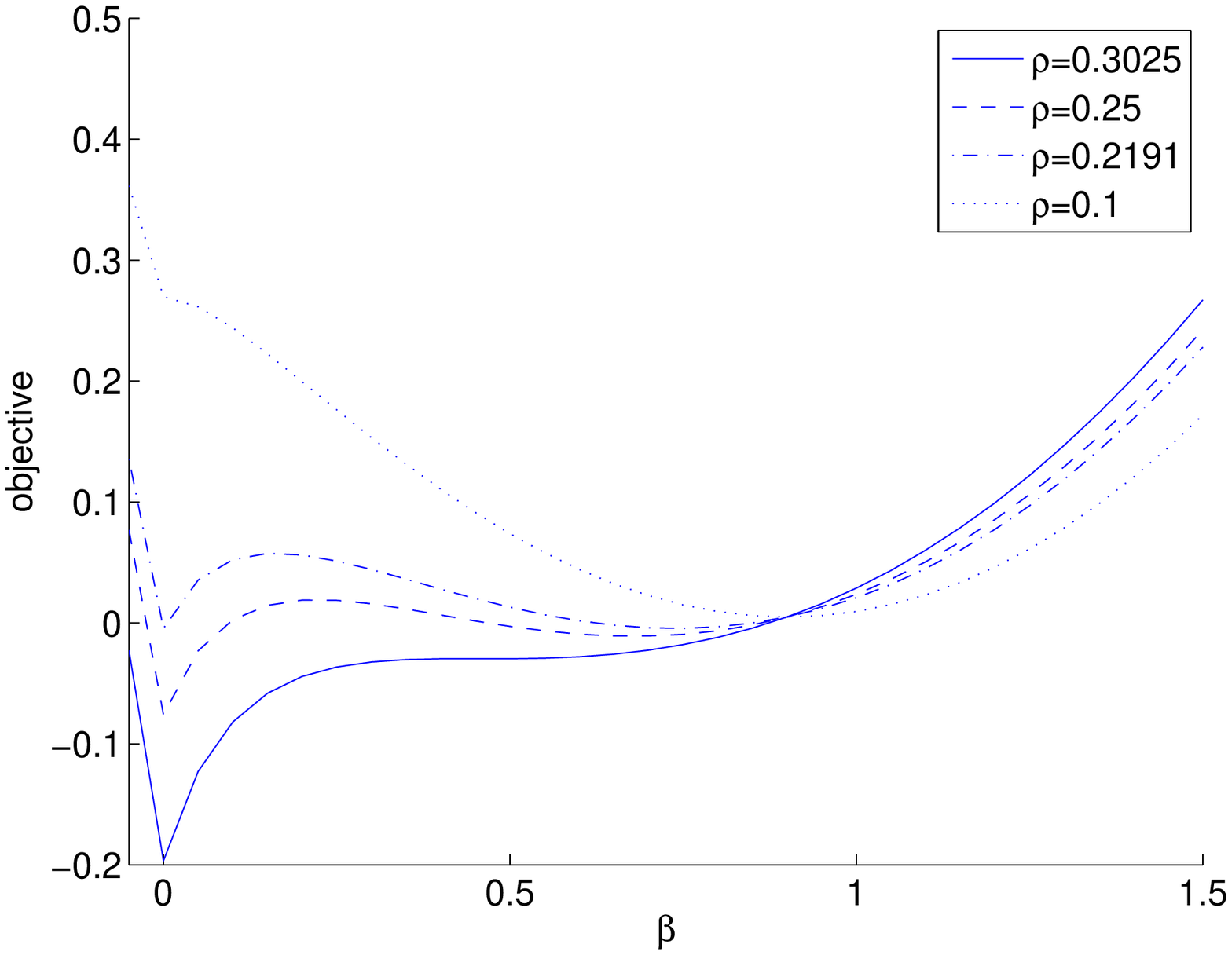} & \includegraphics[width=2.75in]{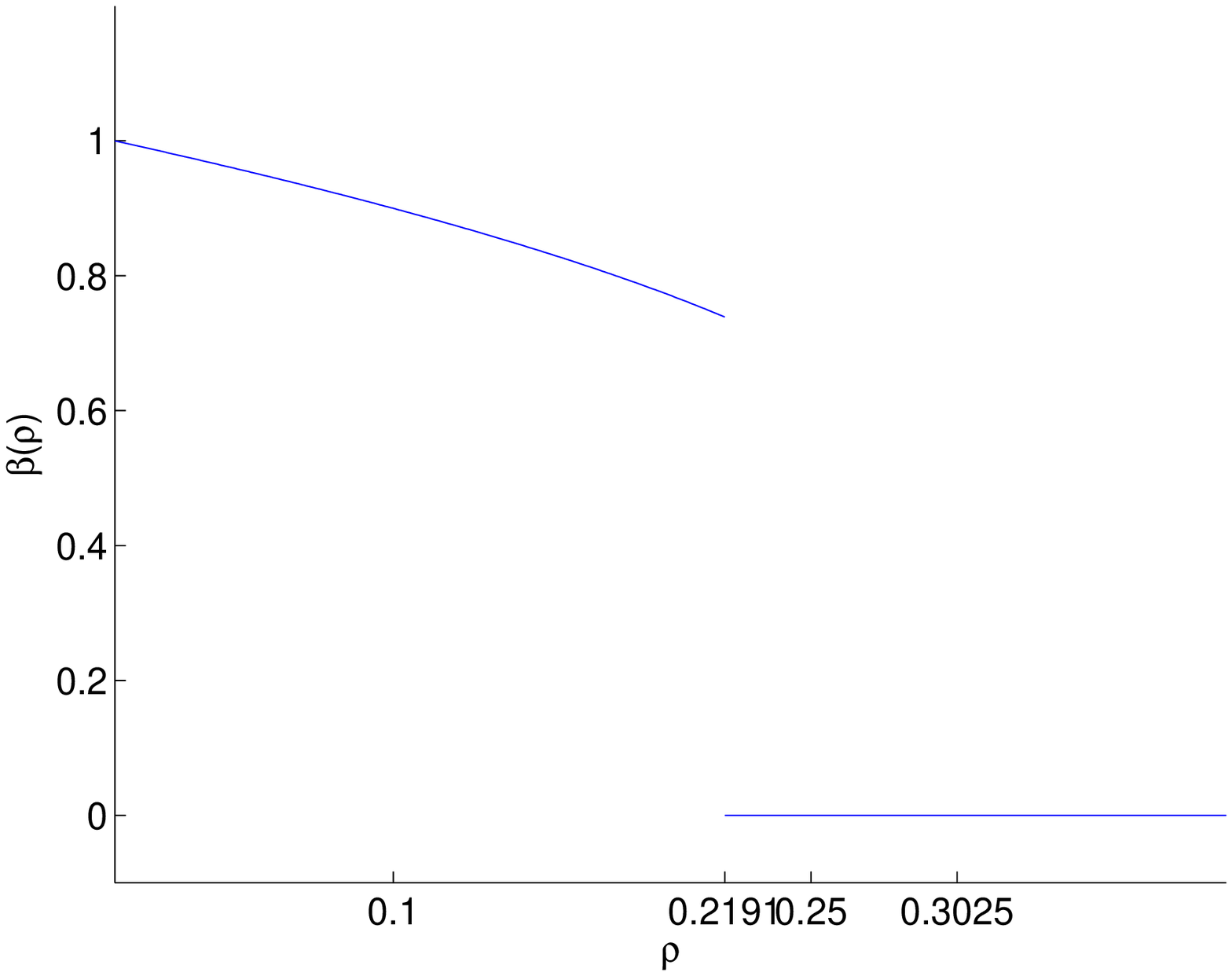}
\end{array}
$$
\end{center}
\caption{Log penalty with orthogonal design: $a=1$, $b=1$, $\eta=0.1$. Left: Graphs of the penalized objective function $a(\beta-b)^2/2 + \rho \ln (\eta+|\beta|)$ at different $\rho$. Right: Solution path. Here $ab\eta = 0.1$ and $a(\eta+|b|)^2/4=0.3025$. Discontinuity occurs somewhere between these two numbers.}
\label{fig:DP-illustration}
\end{figure}

\subsection{Path following for the general case}

The preceding discussion illustrates two difficulties with path seeking in sparse regression using non-convex penalties. First the solution path may not be continuous. Discontinuities occur when predictors enter the model at a non-zero magnitude or vice versa. This is caused by jumping between local minima. Second, in contrast to lasso, the solution path is no longer piecewise linear. This prohibits making giant jumps along the path like LARS.

One strategy is to optimize (\ref{eqn:pen-obj}) at a grid of penalty intensity $\rho$ using coordinate descent. This has found great success with lasso and elastic net regression \citep{Friedman07Coordinate,WuLange08Coordinate,FriedmanHastieTibshirani10GLMCD}. The recent article \citep{Mazumder11SparseNet} explores the coordinate descent strategy with non-convex penalties. In principle it involves applying the thresholding formula to individual regression coefficients until convergence. However, determining the grid size for tuning parameter $\rho$ in advance could be tricky. The larger the grid size the more likely we are to miss important events along the path, while the smaller the grid size the higher the computational costs.

In this section we devise a path seeking strategy that tracks the solution path smoothly while allowing abrupt jumping (due to discontinuities) between segments. The key observation is that each path segment is smooth and satisfies a simple ordinary differential equation (ODE). Recall that the active set ${\cal A} = \bar {\cal S} \cup {\cal S}_{\bar {\cal Z}}$ indexes all unpenalized and nonzero penalized coefficients and $\bH_{\cal A}$ is the Hessian of the penalized objective function restricted to parameters in ${\cal A}$.
\begin{proposition} \label{prop:ode}
The solution path $\bbeta(\rho)$ is continuous and differentiable at $\rho$ if $\bH_{\cal A}(\bbeta,\rho)$ is positive definite. Moreover the solution vector $\bbeta(\rho)$ satisfies
\begin{align}
    \frac{d \bbeta_{\cal A}(\rho)}{d \rho} &= - \bH_{\cal A}^{-1}(\bbeta,\rho) \cdot \bu_{\cal A}(\bbeta,\rho), \label{eqn:sol-ode-compact}
\end{align}
where the matrix $\bH_{\cal A}$ is defined by (\ref{eqn:HA}) and the vector $\bu_{\cal A}(\bbeta)$ has entries
\begin{align*}
    u_j(\bbeta,\rho) = \begin{cases}
    \frac{\partial^2 P_\eta(|\beta_j|,\rho)}{\partial |\beta_j| \partial \rho} \text{sgn}(\beta_j) & j \in {\cal S}_{\bar {\cal Z}} \\
     0 & j \in \bar {\cal S}
     \end{cases}.
\end{align*}
\end{proposition}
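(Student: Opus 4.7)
The plan is to apply the implicit function theorem to the stationarity system from Lemma 1.1, restricted to the active coordinates. Fix $\rho_0$ together with a stationary point $\bbeta(\rho_0)$ whose active set is $\cal A$, and whose Hessian $\bH_{\cal A}(\bbeta(\rho_0),\rho_0)$ is positive definite. For coordinates in ${\cal S}_{\cal Z}$ set $\beta_j \equiv 0$, and for $j \in \cal A$ define
\begin{align*}
G_j(\bbeta_{\cal A},\rho) \;=\; \nabla_j f(\bbeta) \;+\; \frac{\partial P_\eta(|\beta_j|,\rho)}{\partial |\beta_j|}\,\mathrm{sgn}(\beta_j)\,1_{\{j \in {\cal S}_{\bar{\cal Z}}\}},
\end{align*}
where $\bbeta$ is the full parameter vector obtained by extending $\bbeta_{\cal A}$ with zeros. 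The stationarity conditions of Lemma 1.1 are precisely $G(\bbeta_{\cal A}(\rho_0),\rho_0) = 0$, and assumption (v) on the penalty plus twice differentiability of $f$ make $G$ a $C^1$ map in a neighborhood of $(\bbeta_{\cal A}(\rho_0),\rho_0)$, using $\beta_j \neq 0$ for $j \in {\cal S}_{\bar{\cal Z}}$ so that the absolute value and sign function are smooth there.

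The next step is to identify the two partial derivatives of $G$. A direct computation, together with the identity $[\mathrm{sgn}(\beta_j)]^2=1$ for $\beta_j \neq 0$, yields
\begin{align*}
\frac{\partial G_j}{\partial \beta_k} \;=\; [d^2 f(\bbeta)]_{jk} \;+\; \frac{\partial^2 P_\eta(|\beta_j|,\rho)}{\partial |\beta_j|^2}\,1_{\{j=k,\,j \in {\cal S}_{\bar{\cal Z}}\}},
\end{align*}
which matches the definition of $\bH_{\cal A}$ in (\ref{eqn:HA}) entrywise, and
\begin{align*}
\frac{\partial G_j}{\partial \rho} \;=\; \frac{\partial^2 P_\eta(|\beta_j|,\rho)}{\partial |\beta_j|\,\partial \rho}\,\mathrm{sgn}(\beta_j)\,1_{\{j \in {\cal S}_{\bar{\cal Z}}\}},
\end{align*}
which matches $\bu_{\cal A}(\bbeta,\rho)$. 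Since $\bH_{\cal A}$ is assumed positive definite and hence invertible, the implicit function theorem yields a unique $C^1$ solution $\bbeta_{\cal A}(\rho)$ in a neighborhood of $\rho_0$ with $G(\bbeta_{\cal A}(\rho),\rho) \equiv 0$, and chain rule differentiation of this identity gives the claimed ODE
\begin{align*}
\frac{d \bbeta_{\cal A}(\rho)}{d\rho} \;=\; -\bH_{\cal A}^{-1}(\bbeta,\rho)\,\bu_{\cal A}(\bbeta,\rho).
\end{align*}

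Finally, I would verify that the local $C^1$ curve produced by IFT is actually a piece of the solution path, i.e.\ a path of stationary points in the sense of Lemma 1.1. Continuity of $\bbeta_{\cal A}(\rho)$ guarantees that every coordinate in ${\cal S}_{\bar{\cal Z}}$ stays nonzero and keeps its sign in a sufficiently small neighborhood of $\rho_0$, so the stationarity equations for $j \in \cal A$ are preserved. For the sleeping coordinates $j \in {\cal S}_{\cal Z}$, the stationarity condition $|\nabla_j f(\bbeta(\rho))| \le \partial P_\eta(0,\rho)/\partial |\beta_j|$ continues to hold by continuity provided it was strict at $\rho_0$; otherwise $\rho_0$ is precisely a kink point where a new coordinate is about to become active, and the proposition's conclusion applies up to that boundary. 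The main obstacle is really this bookkeeping rather than the calculus: one must isolate a smooth segment of the path on which $\cal A$ and the signs of the active coefficients are locally constant, so that $G$ is genuinely $C^1$ and the IFT applies; events violating either condition are the discontinuities and kinks discussed in Section 2.1 and handled separately by the algorithm.
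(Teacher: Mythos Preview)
Your proof is correct and follows essentially the same approach as the paper: restrict the stationarity conditions to the active set, apply the implicit function theorem using the identification of the Jacobians with $\bH_{\cal A}$ and $\bu_{\cal A}$, and read off the ODE. Your additional bookkeeping on the inactive coordinates and sign preservation is more careful than what the paper records, but the core argument is identical.
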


\begin{proof}
Write the stationarity condition (\ref{eqn:stationarity}) for active predictors as a vector equation $k(\bbeta_{\cal A},\rho) = {\bf 0}$. To solve for $\bbeta_{\cal A}$ in terms of $\rho$, we apply the implicit function theorem \citep{Lange04Optm}. This requires calculating the differential of $k$ with respect to the dependent variables $\bbeta_{\cal A}$ and the independent variable $\rho$
\begin{align*}
    \partial_{\bbeta_{\cal A}} k(\bbeta_{\cal A},\rho) &= \bH_{\cal A}(\bbeta,\rho) \\
    \partial_{\rho} k(\bbeta_{\cal A},\rho) &= \bu_{\cal A}(\bbeta).
\end{align*}
Given the non-singularity of $\bH_{\cal A}(\bbeta,\rho)$, the implicit function theorem applies and shows the continuity and differentiability of $\bbeta_{\cal A}(\rho)$ at $\rho$. Furthermore, it supplies the derivative (\ref{eqn:sol-ode-compact}).
\end{proof}
\noindent
Proposition \ref{prop:ode} suggests that solving the simple ODE segment by segment is a promising path following strategy. However, the potential discontinuity along the path caused by the non-convex penalty has to be taken care of. Note that the stationarity condition (\ref{eqn:stationarity}) for inactive predictors implies
\begin{align*}
    \omega_j = - \frac{\nabla_jf(\bbeta)}{\frac{\partial}{\partial |\beta|}P(|\beta|,\rho)}, \hspace{.2in} j \in {\cal S}_{\cal Z},
\end{align*}
and provides one indicator when the coefficient should escape to ${\cal S}_{\bar {\cal Z}}$ during path following. However, due to the discontinuity, a regression coefficient $\beta_j$, $j \in {\cal S}_{\cal Z}$, may escape with $\omega_j$ in the interior of (-1,1). A more reliable implementation should check whether an inactive regression coefficient $\beta_j$ becomes nonzero using the thresholding formulae at each step of path following. Another complication that discontinuity causes is that occasionally the active set ${\cal S}_{\cal Z}$ may change abruptly along the path especially when predictors are highly correlated. Therefore whenever a discontinuity is detected, it is advisable to use any nonsmooth optimizer, e.g., coordinate descent, to figure out the set configuration and starting point for the next segment. We pick up coordinate descent due to its simple implementation. Our path following strategy is summarized in Algorithm \ref{algo:path}.

\begin{algorithm}
\begin{algorithmic}
\STATE Determine the first penalized predictor $j^*$ to enter model and the corresponding $\rho_{\text{max}}$
\STATE Initialize ${\cal S}_{\cal Z} = \{j^*\}$, ${\cal S}_{\bar {\cal Z}} = {\cal S} \setminus \{j^*\}$, and $\bbeta(\rho_{\text{max}}) = \text{argmin}_{\bbeta_{\cal S}={\bf 0}} f(\bbeta)$
\REPEAT
\STATE Solve ODE
$$
\frac{d \bbeta_{\cal A}(\rho)}{d\rho} = - \bH_{\cal A}(\bbeta,\rho)^{-1} \bu_{\cal A}(\bbeta,\rho)
$$
until (1) an active penalized predictor $\beta_j$, $j \in {\cal S}_{\bar {\cal Z}}$, becomes 0, or (2) an inactive penalized coefficient $w_j$, $j \in {\cal S}_{\cal Z}$, hits 1 or -1, or (3)
an inactive penalized predictor $\beta_j$, $j \in {\cal S}_{\cal Z}$, jumps from 0 to a nonzero minimum, or (4) the matrix $\bH_{\cal A}(\bbeta,\rho)$ becomes singular.
\IF {(1) or (2)}
\STATE Update sets $S_{\cal Z}$ and $S_{\bar {\cal Z}}$
\ELSE
\STATE Apply coordinate descent at current $\rho$ to determine $S_{\cal Z}$, $S_{\bar {\cal Z}}$ and $\beta_{\cal A}$ for next segment
\ENDIF
\UNTIL{termination criterion is met}
\end{algorithmic}
\caption{Path following for sparse regression.}
\label{algo:path}
\end{algorithm}

Several remarks on Algorithm \ref{algo:path} are relevant here.
\begin{remark}[Path following direction]
The ODE (\ref{eqn:sol-ode-compact}) is written in the usual sense and gives the derivative as $\rho$ increases. In sparse regression, we solve in the reverse direction and shall take the opposite sign.
\end{remark}

\begin{remark}[Termination criterion]
Termination criterion for path following may depend on the specific likelihood model. For linear regression, path seeking stops when the number of active predictors exceeds the rank of design matrix $|{\cal A}|>\mathrm{rank}(\bX)$. The situation is more subtle for logistic or Poisson log-linear models due to separation. In binary logistic regression, complete separation occurs when there exists a vector $\bz \in \mathbb{R}^p$ such that $\bx_i^t \bz>0$ for all $y_i=1$ and $\bx_i^t \bz<0$ for all $y_i=0$. When complete separation happens, the log-likelihood is unbounded and the MLE occurs at infinity along the direction $\bz$. The log-likelihood surface behaves linearly along this direction and dominates many non-convex penalties such as power, log, MC+, and SCAD, which is almost flat at infinity. This implies that the penalized estimate also occurs at infinity. Path seeking should terminate whenever separation is detected, which may happen when $|{\cal A}|$ is much smaller than the rank of the design matrix in large $p$ small $n$ problems. Separation occurs in the Poisson log-linear model too. Our implementation also allows users to input the maximum number of selected predictors until path seeking stops, which is convenient for exploratory analysis of ultra-high dimensional data.
\end{remark}

\begin{remark}[Computational Complexity and Implementation]
Any ODE solver repeatedly evaluates the derivative \eqref{eqn:sol-ode-compact}. The path segment stopping events (1)-(4) are checked during each derivative evaluation. Since the Hessian restricted to the active predictors is always positive semidefinite and the inactive penalized predictors are checked by thresholding, the quality of solution along the path is as good as any fixed tuning parameter optimizer such as coordinate descent \citep{Mazumder11SparseNet}. Computational complexity of Algorithm \ref{algo:path} depends on the loss function, number of smooth path segments, and the method for solving the ODE. Evaluating derivative \eqref{eqn:sol-ode-compact} takes $O(n|{\cal A}|^2)$ flops for calculating the Hessian of a GLM loss $\ell$ and takes $O(|{\cal A}|^3)$ flops for solving the linear system. Detecting jumps of inactive penalized predictor by thresholding takes $O(|\tilde {\cal A}|)$ flops. The cost of $O(|{\cal A}|^3)$ per gradient evaluation is not as daunting as it appears. Suppose any fixed tuning parameter optimizer is utilized for path following with a warm start. When at a new $\rho$, assuming that the active set ${\cal A}$ is known, even the fastest Newton's method needs to solve the same linear system multiple times until convergence. The efficiency of Algorithm \ref{algo:path} lies in the fact that no iterations are needed at any $\rho$ and it adaptively chooses step sizes to catch all events along the path. Algorithm \ref{algo:path} is extremely simple to implement using software with a reliable ODE solver such as the {\tt ode45} function in {\sc Matlab} and the {\tt deSolve} package in {\sc R} \citep{Soetaert10RdeSolve}. For instance, the rich numerical resources of {\sc Matlab} include differential equation solvers that alert the user when certain events such as those stopping rules in Algorithm \ref{algo:path} are fulfilled.
\end{remark}

\begin{remark}[Knots in SCAD and MC+]
Solving the ODE (\ref{eqn:sol-ode-compact}) requires the second order partial derivatives $\frac{\partial^2}{\partial |\beta|^2} P(|\beta|,\rho)$ and $\frac{\partial^2}{\partial |\beta| \partial \rho} P(|\beta|,\rho)$ of the penalty functions, which are listed in Table \ref{table:pen}. Due to their designs, these partial derivatives are undetermined for SCAD and MC+ penalties at the knots: $\{\rho,\eta\rho\}$ for SCAD and $\{\eta \rho\}$ for MC+. However only the directional derivatives are needed, which are well-defined. Specifically we use $\frac{\partial^2}{\partial |\beta|^2} P(|\beta|,\rho_-)$ and $\frac{\partial^2}{\partial |\beta| \partial \rho} P(|\beta|,\rho_-)$. In practice, the ODE solver rarely steps on these knots exactly due to numerical precision.
\end{remark}

Finally, switching the role of $\rho$ and $\eta$, the same argument leads to an analogous result for path following in the penalty parameter $\eta$ with a fixed regularization parameter $\rho$. In this article we focus on path following in $\rho$ with fixed $\eta$ in the usual sense. Implications of the next result will be investigated in future work.
\begin{proposition}[Path following in $\eta$]
\label{prop:ode}
Suppose the partial derivative $\frac{\partial P_\eta(t,\rho)}{\partial t \partial \eta}$ exists at all $t>0$ and $\rho$. For fixed $\rho$, the solution path $\bbeta(\eta)$ is continuous and differentiable at $\eta$ if $\bH_{\cal A}(\bbeta,\eta)$ is positive definite. Moreover the solution vector $\bbeta(\eta)$ satisfies
\begin{eqnarray*}
    \frac{d \bbeta_{\cal A}(\eta)}{d \eta} &=& - \bH_{\cal A}^{-1}(\bbeta,\eta) \cdot \bu_{\cal A}(\bbeta,\eta),
\end{eqnarray*}
where the matrix $\bH_{\cal A}$ is defined by (\ref{eqn:HA}) and the vector $\bu_{\cal A}(\bbeta)$ has entries
\begin{align*}
    u_j(\bbeta,\eta) = \begin{cases}
    \frac{\partial^2 P_\eta(|\beta_j|,\rho)}{\partial |\beta_j| \partial \eta} \text{sgn}(\beta_j) & j \in {\cal S}_{\bar {\cal Z}} \\
     0 & j \in \bar {\cal S}
     \end{cases}.
\end{align*}
\end{proposition}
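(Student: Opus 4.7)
The plan is to reuse the proof architecture of the earlier $\rho$-direction proposition almost verbatim, with $\eta$ now playing the role that $\rho$ played there while $\rho$ is treated as a fixed constant that enters only through the penalty. First I would collect the stationarity conditions from Lemma 1 restricted to the active coordinates as a vector equation $k(\bbeta_{\cal A}, \eta) = \mathbf{0}$, whose $j$-th entry is $\nabla_j f(\bbeta) + \frac{\partial P_\eta(|\beta_j|,\rho)}{\partial |\beta_j|} \text{sgn}(\beta_j)$ for $j \in {\cal S}_{\bar{\cal Z}}$ and $\nabla_j f(\bbeta)$ for $j \in \bar{\cal S}$. The standing regularity assumptions on $P_\eta$ together with the hypothesized existence of $\frac{\partial^2 P_\eta(t,\rho)}{\partial t \, \partial \eta}$ make $k$ continuously differentiable jointly in $(\bbeta_{\cal A},\eta)$, which is all that the implicit function theorem needs.

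Next I would compute the two Jacobian blocks. Differentiating $k$ with respect to the dependent variables $\bbeta_{\cal A}$ reproduces the Hessian matrix in (\ref{eqn:HA}), that is $\partial_{\bbeta_{\cal A}} k = \bH_{\cal A}(\bbeta,\eta)$; the structural form is identical to the earlier proposition, with $\eta$ simply entering parametrically through $\frac{\partial^2 P_\eta}{\partial |\beta_j|^2}$. Differentiating $k$ with respect to the scalar $\eta$ yields exactly the announced vector $\bu_{\cal A}(\bbeta,\eta)$: for $j \in {\cal S}_{\bar{\cal Z}}$ the entry is $\frac{\partial^2 P_\eta(|\beta_j|,\rho)}{\partial |\beta_j| \partial \eta} \text{sgn}(\beta_j)$, while for $j \in \bar{\cal S}$ it vanishes because the loss $f$ is independent of $\eta$. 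The positive-definiteness hypothesis on $\bH_{\cal A}(\bbeta,\eta)$ then supplies the invertibility required, and the implicit function theorem simultaneously delivers local continuity and differentiability of $\bbeta_{\cal A}(\eta)$ together with the derivative formula $\frac{d\bbeta_{\cal A}}{d\eta} = -\bH_{\cal A}^{-1} \bu_{\cal A}$.

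The main obstacle, as in the $\rho$ case, is purely conceptual rather than computational: one must justify that on a smooth segment the active set ${\cal A}$ genuinely stays fixed, so that treating $k$ as a system on ${\cal A}$ alone is legitimate. This amounts to observing that for $j \in {\cal S}_{\cal Z}$ the subgradient coefficient $\omega_j$ stays strictly inside $(-1,1)$ by continuity, and for $j \in {\cal S}_{\bar{\cal Z}}$ the coordinate $\beta_j$ stays nonzero. Away from the discrete events where one of these strict inequalities first fails, the implicit function theorem argument applies locally and yields the proposition. Since the statement is local in $\eta$, no global continuation argument is needed, and the proof can be written in essentially the same two-line style as before.
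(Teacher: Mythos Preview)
Your proposal is correct and matches the paper's approach exactly: the paper does not even write out a separate proof, merely noting that ``switching the role of $\rho$ and $\eta$, the same argument leads to an analogous result,'' which is precisely the implicit function theorem argument you outline. If anything, your discussion of why the active set stays fixed on a smooth segment is more explicit than anything the paper provides.
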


\section{Empirical Bayes Model Selection}
\label{sec:emp-bayes}

In practice, the regularization parameter $\rho$ in sparse regression is tuned according to certain criteria. Often we wish to avoid cross-validation and rely on more efficient procedures. AIC, BIC and similar variants have frequently been used. Recall that BIC arises from a Laplace approximation to the log-marginal density of the observations under a Bayesian model. The priors on the parameters are specifically chosen to be normal with mean set at the maximum likelihood estimator and covariance that conveys the Fisher information observed from one observation. This allows for a rather diffuse prior relative to the likelihood. Hence the resulting maximum a posteriori estimate is the maximum likelihood estimator. Often users plug in the estimates from sparse regression into AIC or BIC to assess the quality of the estimate/model. In this section we derive an appropriate empirical Bayes criterion that corresponds to the exact prior under which we are operating. All necessary components for calculating the empirical Bayes criterion fall out nicely from the path following algorithm. A somewhat similar approach was taken by \cite{YuanLin05EmpiricalBayes} to pick an appropriate tuning parameter for the lasso penalized least squares noting that the underlying Bayesian model is formed with a mixture prior -- a spike at zero and a double exponential distribution on $\beta_j \in \mathbb{R}$.

Conforming to previous notation, a model is represented by the active set ${\cal A} = \bar {\cal S} \cup {\cal S}_{\cal Z}$ which includes both un-penalized and selected penalized regression coefficients. By Bayes formula, the probability of a model ${\cal A}$ given data $\by$ is
\begin{eqnarray*}
    p({\cal A}|\by) = \frac{p(\by|{\cal A})p({\cal A})}{p(\by)}.
\end{eqnarray*}
Assuming equal prior probability for all models, an appropriate Bayesian criterion for model comparison is the marginal data likelihood $p(\by|{\cal A})$ of model ${\cal A}$. If the penalty in the penalized regression is induced by a proper prior $\pi(\beta)$ on the regression coefficients, the marginal likelihood is calculated as
\begin{eqnarray}
    p(\by|{\cal A}) = \int \pi(\bbeta_{\cal A},\by) \, d\bbeta_{\cal A} = \int \pi(\by|\bbeta_{\cal A}) \prod_{j \in {\cal A}} \pi(\beta_j) \, d\bbeta_{\cal A}. \label{eqn:post-integral}
\end{eqnarray}
In most cases the integral cannot be analytically calculated. Fortunately the Laplace approximation is a viable choice in a similar manner to BIC, in which $\pi(\beta_j)$ is taken as the vaguely informative unit information prior. We illustrate this general procedure with the log and power penalties. Note that both the regularization parameter $\rho$ and penalty parameter $\eta$ are treated as hyper-parameters in priors. Thus the procedure not only allows comparison of models along the path with fixed $\eta$ but also models with distinct $\eta$.

\subsection{Log penalty}

The log penalty arises from a generalized double Pareto prior \citep{Armagan11Pareto} on regression coefficients
\begin{eqnarray*}
    \pi(\beta|\alpha,\eta) = \frac{\alpha \eta^\alpha}{2} (|\beta|+\eta)^{-(\alpha+1)}, \hspace{.2in} \alpha,\eta>0.
\end{eqnarray*}
Writing $\rho=\alpha+1$ and placing a generalized double Pareto prior on $\beta_j$ for active coefficients $j \in {\cal A}$ estimated at $(\rho,\eta)$, the un-normalized posterior is given by
\begin{eqnarray*}
\pi(\bbeta_{\cal A},\by|\rho,\eta) &=& \left\{\frac{(\rho-1)\eta^{\rho-1}}{2}\right\}^q \exp\left\{\ell(\bbeta_{\cal A})-\rho\sum_{j\in \mathcal{A}}\ln(\eta+|\beta_j|)\right\},
\nonumber \\
&=& \left\{\frac{(\rho-1)\eta^{\rho-1}}{2}\right\}^q \exp\left\{-h(\bbeta_{\cal A})\right\}, \label{post2}
\end{eqnarray*}
where $q = |{\cal A}|$ and $h(\bbeta_{\cal A}) =-\ell(\bbeta_{\cal A}) + \rho \sum_{j\in \mathcal{A}} \ln(\eta+|\beta_j|)$.
Then a Laplace approximation to the integral (\ref{eqn:post-integral}) enables us to assess the relative quality of an estimated model $\hat {\cal A}$ at particular hyper/tuning parameter values $(\rho,\eta)$ by $\mathrm{EB}(\rho,\eta) = - \ln p(\by|\tilde {\cal A})$. The following result displays the empirical Bayes criterion for the log penalty and then specializes to the least squares case which involves unknown variance. Note that, by the definition of double Pareto prior, $\rho>1$. Otherwise the prior on $\beta_j$ is no longer proper.

\begin{proposition}
For $\rho>1$, an empirical Bayes criterion for the log penalized regression is
\begin{eqnarray*}
\mathrm{EB}_{\log}(\rho,\eta) \equiv -q \ln\left\{\left(\frac{\pi}{2}\right)^{1/2}(\rho-1)\eta^{\rho-1}\right\} + h(\tilde \bbeta) + \frac{1}{2} \log \det \bH_{\cal A}(\tilde \bbeta),
\end{eqnarray*}
where $\tilde \bbeta$ is the path solution at $(\rho,\eta)$, ${\cal A} = {\cal A}(\tilde \bbeta)$ is the set of active regression coefficients, $q=|{\cal A}|$, and $\bH_{\cal A}$ is the restricted Hessian defined by (\ref{eqn:HA}). Under a linear model, it becomes
\begin{eqnarray*}
\mathrm{EB}_{\log}(\rho,\eta) \equiv -q \ln\left\{\left(\frac{\pi\tilde{\sigma}^2}{2}\right)^{1/2}\left(\frac{\rho}{\tilde{\sigma}^2}-1\right)\eta^{\rho/\tilde{\sigma}^2-1}\right\} + \frac{h(\tilde \bbeta)}{\tilde{\sigma}^2} + \frac{1}{2} \log \det \bH_{\cal A}(\tilde \bbeta),
\label{mlike1blin}
\end{eqnarray*}
where
\begin{equation*}
\tilde{\sigma}^2 = \mathrm{argmin}_{\sigma^2} \left\{-\frac{n-q}{2}\ln\sigma^2+\frac{q\rho}{\sigma^2}\ln\eta + q\ln\left(\frac{\rho}{\sigma^2}-1\right) - \frac{h(\tilde \bbeta)}{\sigma^2}\right\}.
\end{equation*}
\end{proposition}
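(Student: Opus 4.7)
The plan is to derive both displayed formulas as Laplace approximations to the marginal likelihood integral (\ref{eqn:post-integral}), anchored at the path solution $\tilde\bbeta$ delivered by Algorithm~\ref{algo:path}.

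For the first (general-likelihood) formula, I would begin by substituting the generalized double Pareto density directly into (\ref{eqn:post-integral}) to write
\begin{equation*}
p(\by\mid {\cal A}) = \left\{\frac{(\rho-1)\eta^{\rho-1}}{2}\right\}^{q}\int e^{-h(\bbeta_{\cal A})}\,d\bbeta_{\cal A},
\end{equation*}
with $h$ exactly as displayed just before the proposition statement. A second-order Taylor expansion of $h$ around its mode $\tilde\bbeta$ kills the linear term by the stationarity condition (\ref{eqn:stationarity}), reducing the inner integral to the standard Gaussian Laplace integral $(2\pi)^{q/2}\det\bH_{\cal A}(\tilde\bbeta)^{-1/2}e^{-h(\tilde\bbeta)}$. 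Taking $-\log p(\by\mid {\cal A})$ and merging the constants $(2\pi)^{q/2}$ and $2^{-q}$ with $(\rho-1)^{q}\eta^{q(\rho-1)}$, via the identity $(2\pi)^{1/2}/2=(\pi/2)^{1/2}$, produces the first displayed criterion.

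For the linear-model specialization, I would introduce $\sigma^{2}$ as an unknown variance and scale the prior on each $\beta_{j}$ by replacing $\rho$ with $\rho/\sigma^{2}$; this is the natural scale-equivariant choice and keeps $\tilde\bbeta$ as the conditional posterior mode of $\bbeta$ for every $\sigma^{2}$. The negative log joint becomes
\begin{equation*}
\frac{n}{2}\log(2\pi\sigma^{2})+\frac{h(\bbeta)}{\sigma^{2}}-q\log\left\{\frac{(\rho/\sigma^{2}-1)\eta^{\rho/\sigma^{2}-1}}{2}\right\}
\end{equation*}
plus a noninformative prior contribution for $\sigma^{2}$. Applying Laplace over $\bbeta$ at $\tilde\bbeta$, with the relevant restricted Hessian now $\bH_{\cal A}(\tilde\bbeta)/\sigma^{2}$ and therefore $\det(\bH_{\cal A}/\sigma^{2})=\sigma^{-2q}\det\bH_{\cal A}$ contributing an extra $-\tfrac{q}{2}\log\sigma^{2}$, collects the $\sigma^{2}$-dependent pieces. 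Profiling that univariate objective over $\sigma^{2}$ is precisely the minimization defining $\tilde\sigma^{2}$; substituting $\tilde\sigma^{2}$ back into the approximated $-\log p(\by\mid{\cal A})$ gives the second displayed criterion.

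The main obstacle is purely bookkeeping rather than substance: tracking sign conventions (the criterion is $-\log p$, not $\log p$), deciding which $\sigma^{2}$-terms end up in the implicit definition of $\tilde\sigma^{2}$ versus in the criterion itself after substitution, and pushing through the algebraic identity that collapses $(2\pi)^{q/2}2^{-q}$ into $(\pi/2)^{q/2}$. A secondary point worth flagging is regularity: the Laplace approximation needs $\bH_{\cal A}(\tilde\bbeta)$ to be positive definite, not merely positive semidefinite as the necessary optimality lemma supplies, and $\rho>1$ (respectively $\rho/\tilde\sigma^{2}>1$ in the linear case) so that the generalized double Pareto prior is proper and the normalizing constant is well defined. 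Both conditions mirror the nonsingularity hypothesis invoked in Proposition~\ref{prop:ode} for smoothness of the solution path.
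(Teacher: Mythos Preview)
Your proposal is correct and follows essentially the same route as the paper: both derive the general criterion by substituting the generalized double Pareto normalizing constant into (\ref{eqn:post-integral}), applying the Laplace approximation at the path solution $\tilde\bbeta$, and taking the negative logarithm, and both handle the linear-model case by reparameterizing the prior via $\alpha=\rho/\sigma^{2}-1$, applying Laplace over $\bbeta_{\cal A}$, and then profiling over $\sigma^{2}$. Your remarks on the bookkeeping identity $(2\pi)^{1/2}/2=(\pi/2)^{1/2}$ and on the regularity conditions ($\rho>1$, positive definiteness of $\bH_{\cal A}$) match the paper's treatment.
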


\begin{proof}
The Laplace approximation to the normalizing constant (\ref{eqn:post-integral}) is given by
\begin{equation*}
\ln p(\by|\tilde {\cal A}) \approx \ln \pi(\tilde {\bbeta}_{\tilde {\cal A}},\by|\rho, \eta) + \frac{q}{2} \ln(2\pi) - \frac{1}{2} \log \det d^2h(\tilde{\bbeta}_{\tilde {\cal A}}),
\end{equation*}
where $\tilde {\bbeta}_{\tilde {\cal A}} = \mathrm{argmin}_{\bbeta_{\tilde {\cal A}}} h(\bbeta_{\tilde {\cal A}}) = \mathrm{argmin}_{\bbeta_{\tilde {\cal A}}} - \ell(\bbeta_{\tilde {\cal A}}) + \rho \sum_{j \in \tilde {\cal A}} \ln(\eta+|\beta_j|)$ and $[d^2 h(\bbeta_{\tilde {\cal A}})]_{jk} = [-d^2 \ell (\bbeta_{\tilde {\cal A}})]_{jk} + \rho (\eta+|\beta_j|)^{-2} 1_{\{j=k\}}$ for $j,k \in \mathcal{A}$. Then the empirical Bayes criterion is
\begin{eqnarray*}
\mathrm{EB}(\rho,\eta) &=& - \ln p(\by|\tilde {\cal A}) \\
&\approx& q\ln2 - q(\rho-1) \ln \eta - q\ln(\rho-1) + \min_{\bbeta_{\tilde {\cal A}}} h(\bbeta_{\tilde {\cal A}}) - \frac{q}{2} \ln(2\pi) + \frac{1}{2} \log \det d^{2} h(\tilde{\bbeta}_{\tilde {\cal A}})  \\
&=& - q \ln \left\{\left(\frac{\pi}{2}\right)^{1/2}(\rho-1)\eta^{\rho-1}\right\} + \min_{\bbeta_{\tilde {\cal A}}} h(\bbeta_{\tilde {\cal A}}) + \frac{1}{2} \log \det d^2 h(\tilde{\bbeta}_{\tilde {\cal A}}).
\end{eqnarray*}
Now consider the linear model with unknown variance $\sigma^2$,
\begin{eqnarray*}
    \pi(\bbeta_{\cal A},\by|\alpha,\eta,\sigma^2) &=& (2\pi\sigma^2)^{-n/2} \left( \frac{\alpha\eta^\alpha}{2}\right)^q \nonumber \\
    & &\times\exp\left\{-\frac{\|\by-\bX_{\cal A}\bbeta_{\cal A}\|_2^2 + 2\sigma^2(\alpha+1)\sum_{j\in \mathcal{A}} \ln(\eta+|\beta_j|)}{2\sigma^2}\right\} \\
    &=& (2\pi\sigma^2)^{-n/2}\left\{\frac{(\rho/\sigma^2-1)\eta^{\rho/\sigma^2-1}}{2}\right\}^q \exp \{ - h(\bbeta_{\cal A})/\sigma^2 \},
\end{eqnarray*}
where $\rho=\sigma^2(\alpha+1)$ and $h(\bbeta_{\cal A}) = \|\by-\bX_{\cal A}\bbeta_{\cal A}\|_2^2/2 + \rho \sum_{j\in \mathcal{A}} \ln(\eta+|\beta_j|)$.
The Laplace approximation to the normalizing constant is then given by
\begin{equation*}
\ln p(\by|\tilde {\cal A},\sigma^2) \approx \ln \pi(\tilde{\bbeta}_{\tilde {\cal A}},\by|\tilde {\cal A},\sigma^2)+\frac{q}{2}\ln(2\pi) - \frac{1}{2} \log \det [\sigma^{-2} d^2 h(\tilde{\bbeta}_{\tilde {\cal A}})],
\end{equation*}
which suggests the empirical Bayes criterion
\begin{eqnarray*}
\mathrm{EB}(\eta,\rho|\sigma^2) &\approx& \frac{n-q}{2} \ln(2\pi\sigma^2) + q\ln2 - q\left(\frac{\rho}{\sigma^2}-1\right) \ln\eta - q\ln\left(\frac{\rho}{\sigma^2}-1\right)\nonumber\\
& & + \frac{\min_{\bbeta_{\tilde {\cal A}}} h(\bbeta_{\tilde {\cal A}})}{\sigma^2} + \frac{1}{2} \log \det d^{2} h(\tilde{\bbeta}_{\tilde {\cal A}}).
\end{eqnarray*}
Given $\tilde {\cal A}$, we can easily compute the value $\sigma^2$ that minimizes the right-hand side
\begin{equation*}
\tilde{\sigma}^2 = \mathrm{argmin}_{\sigma^2} \left\{ \frac{n-q}{2} \ln \sigma^2 - \frac{q\rho}{\sigma^2} \ln \eta - q\ln\left(\frac{\rho}{\sigma^2}-1\right) + \frac{\min_{\bbeta_{\tilde {\cal A}}} h(\bbeta_{\tilde {\cal A}})}{\sigma^2}\right\}.
\end{equation*}
\end{proof}

\subsection{Power Family}

The power family penalty is induced by an exponential power prior on the regression coefficients
\begin{eqnarray*}
    \pi(\beta|\rho,\eta) = \frac{\eta\rho^{1/\eta}}{2\Gamma(1/\eta)} e^{-\rho |\beta|^\eta}, \hspace{.2in} \rho, \eta >0.
\end{eqnarray*}
The unnormalized posterior of the regression coefficients given a model $\cal A$ can be written as
\begin{eqnarray}
\pi(\bbeta_{\cal A},\by|\rho,\eta) &=& \left( \frac{\eta\rho^{1/\eta}}{2\Gamma(1/\eta)} \right)^q \exp\left\{\ell(\bbeta_{\cal A})-\rho\sum_{j \in \mathcal{A}}|\beta_j|^{\eta}\right\} \nonumber   \\
&=& \left( \frac{\eta\rho^{1/\eta}}{2\Gamma(1/\eta)} \right)^q \exp\left\{-h(\bbeta_{\cal A})\right\}
\label{post1pf}
\end{eqnarray}
Again the Laplace approximation to the posterior $p(\by|\hat {\cal A})$ yields the following empirical Bayes criterion for power family penalized regression.
\begin{proposition}
An empirical Bayes criterion for the power family penalized regression is
\begin{eqnarray*}
\mathrm{EB}_{\mathrm{PF}}(\rho,\eta) \equiv -q \ln\frac{\sqrt{\pi}\eta\rho^{1/\eta}}{\sqrt{2}\Gamma(1/\eta)} + h(\tilde \bbeta) + \frac{1}{2} \log \det \bH_{\cal A}(\tilde \bbeta).
\label{mlike2b}
\end{eqnarray*}
For linear regression, it becomes
\begin{eqnarray*}
\mathrm{EB}_{\mathrm{PF}}(\rho,\eta) \equiv -q \ln\frac{\sqrt{\pi}\eta\rho^{1/\eta}}{\sqrt{2}\Gamma(1/\eta)}+\left(\frac{n-q}{2}-\frac{q}{\eta}\right)\left\{1+\ln\frac{h(\tilde \bbeta)}{(n-q)/2+q/\eta}\right\} + \frac{1}{2} \log \det \bH_{\cal A}(\tilde \bbeta).
\label{mlike2bpflin}
\end{eqnarray*}
\end{proposition}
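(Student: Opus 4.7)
The plan is to follow the same Laplace-approximation route used for the log-penalty proposition, applied this time to the unnormalized posterior (\ref{post1pf}). Since that density already has the form $\pi(\bbeta_{\cal A},\by|\rho,\eta) = C\exp\{-h(\bbeta_{\cal A})\}$ with $C = (\eta\rho^{1/\eta}/[2\Gamma(1/\eta)])^q$ and a twice-differentiable $h$ near the path solution $\tilde\bbeta_{\tilde{\cal A}}$, the standard $q$-dimensional Laplace formula at the mode gives
\[
\ln p(\by|\tilde{\cal A}) \approx \ln C - h(\tilde\bbeta) + \tfrac{q}{2}\ln(2\pi) - \tfrac{1}{2}\log\det d^2 h(\tilde\bbeta),
\]
and $d^2 h(\tilde\bbeta)$ restricted to the active block is exactly $\bH_{\cal A}(\tilde\bbeta)$ by definition (\ref{eqn:HA}). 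Negating and combining the $-q\ln 2$ from $\ln C$ with the $\tfrac{q}{2}\ln(2\pi)$ Laplace constant via $\sqrt{2\pi}/2 = \sqrt{\pi/2}$ produces the claimed prefactor $-q\ln[\sqrt{\pi}\eta\rho^{1/\eta}/(\sqrt{2}\Gamma(1/\eta))]$ and yields the general formula.

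For the linear-model specialization, I would introduce an unknown variance $\sigma^2$ and reparameterize the prior rate as $\rho/\sigma^2$, in complete analogy with the log-penalty step $\rho = \sigma^2(\alpha+1)$. This choice is what makes the negative log posterior, viewed as a function of $\bbeta$ at fixed $\sigma^2$, reduce to $h(\bbeta_{\cal A})/\sigma^2$, so that its MAP in $\bbeta$ coincides with the path solution $\tilde\bbeta$ of $\min_{\bbeta_{\cal A}} \|\by - \bX_{\cal A}\bbeta_{\cal A}\|_2^2/2 + \rho\sum_{j\in{\cal A}}|\beta_j|^\eta$. The joint then reads
\[
\pi(\bbeta_{\cal A},\by|\rho,\eta,\sigma^2) = (2\pi\sigma^2)^{-n/2}\Bigl(\tfrac{\eta\rho^{1/\eta}}{2\Gamma(1/\eta)\sigma^{2/\eta}}\Bigr)^q \exp\{-h(\bbeta_{\cal A})/\sigma^2\}.
\]
Running the same Laplace step in $\bbeta_{\cal A}$ uses Hessian $\sigma^{-2}\bH_{\cal A}$, whose log-determinant contributes an additional $+\tfrac{q}{2}\ln\sigma^2$ to $\ln p$. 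Collecting every $\ln\sigma^2$ contribution (from $(2\pi\sigma^2)^{-n/2}$, from $\sigma^{-2q/\eta}$ in the prior normalizer, and from the Hessian term) produces a function of $\sigma^2$ whose first-order condition admits the closed form $\tilde\sigma^2 = h(\tilde\bbeta)/[(n-q)/2 + q/\eta]$. Substituting $\tilde\sigma^2$ back, using $h(\tilde\bbeta)/\tilde\sigma^2 = (n-q)/2 + q/\eta$ to factor the entropy-like term as a coefficient times $\{1 + \ln[h(\tilde\bbeta)/((n-q)/2+q/\eta)]\}$, and absorbing $\tfrac{n}{2}\ln(2\pi)$ (independent of $\rho,\eta,{\cal A}$) into the $\equiv$ symbol, yields the profile expression.

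The main obstacle is purely bookkeeping: tracking the coefficient of $\ln\sigma^2$ across the four sources above so that the stationarity equation for $\sigma^2$ produces the asserted closed form, and ensuring the additive constants $q\ln 2$, $\tfrac{q}{2}\ln(2\pi)$, and $\tfrac{n}{2}\ln(2\pi)$ combine correctly so that the $-q\ln[\sqrt{\pi}\eta\rho^{1/\eta}/(\sqrt{2}\Gamma(1/\eta))]$ prefactor appears in both formulas identically. Nothing else is genuinely new compared to the log-penalty proof; the power family differs only in the explicit normalizing constant of the prior, and that constant propagates through the Laplace calculation and the profile step without any structural change.
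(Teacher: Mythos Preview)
Your proposal is correct and follows essentially the same approach as the paper: apply the Laplace approximation to the posterior (\ref{post1pf}), combine the constants $q\ln 2$ and $\tfrac{q}{2}\ln(2\pi)$ into the $\sqrt{\pi}/\sqrt{2}$ prefactor, and for the linear case introduce $\sigma^2$ with prior rate $\rho/\sigma^2$, profile out $\sigma^2$ in closed form as $\tilde\sigma^2 = h(\tilde\bbeta)/[(n-q)/2+q/\eta]$, and drop the $(n/2)\ln(2\pi)$ constant. The paper's proof does exactly this, with the same bookkeeping you outline.
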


\begin{proof}
Given a certain model $\tilde {\cal A}$ observed at $(\rho,\eta)$, the Laplace approximation to the normalizing constant is given by
\begin{equation*}
\ln p(\by|\tilde {\cal A}) \approx \ln \pi(\tilde{\bbeta}_{\tilde {\cal A}},\by|\tilde {\cal A})+\frac{q}{2}\ln(2\pi) - \frac{1}{2} \log \det d^2 h(\tilde{\bbeta}_{\tilde {\cal A}})
\end{equation*}
where $\tilde{\bbeta}_{\tilde {\cal A}}=\mathrm{argmin}_{\bbeta_{\tilde {\cal A}}}h(\bbeta_{\tilde {\cal A}})=\mathrm{argmin}_{\bbeta_{\tilde {\cal A}}}-\ell(\bbeta_{\tilde {\cal A}})+\rho\sum_{j \in \mathcal{A}}|\beta_j|^{\eta}$ and $[d^2 h(\bbeta_{\tilde {\cal A}})]_{jk}=[-d^2 \ell(\bbeta_{\tilde {\cal A}})]_{jk}+\rho\eta(\eta-1)|\beta_j|^{\eta-2}1_{\{j=k\}}$ for $j,k \in \mathcal{A}$. Then
\begin{eqnarray*}
\ln p(\by|\tilde {\cal A}) &\approx& \frac{q}{2}\ln(2\pi)-q\ln2+q\ln\eta+\frac{q}{\eta}\ln\rho-q\ln\Gamma(1/\eta)\nonumber\\
& &-\min_{\bbeta_{\tilde {\cal A}}}h(\bbeta_{\tilde {\cal A}})-\frac{1}{2}\log \det d^{2}h(\tilde{\bbeta}_{\tilde {\cal A}}),
\end{eqnarray*}
which yields
\begin{equation*}
\mbox{EB}(\rho,\eta) \equiv -q\ln\frac{\sqrt{\pi}\eta\rho^{1/\eta}}{\sqrt{2}\Gamma(1/\eta)}+\min_{\bbeta_{\tilde {\cal A}}}h(\bbeta_{\tilde {\cal A}})+\frac{1}{2}\log \det d^{2}h(\tilde{\bbeta}_{\tilde {\cal A}}).
\end{equation*}
Now consider the linear model case,
\begin{eqnarray*}
    \pi(\bbeta_{\cal A},\by|\rho,\eta,\sigma^2) &=& (2\pi\sigma^2)^{-n/2}\left(\frac{\eta\rho^{1/\eta}}{2\sigma^{2/\eta}\Gamma(1/\eta)}\right)^q \exp\left\{- \frac{\|\by-\bX_{\cal A} \bbeta_{\cal A}\|_2^2/2 + \rho\sum_{j \in \mathcal{A}}|\beta_j|^{\eta}}{\sigma^2}\right\} \\
    &=& (2\pi\sigma^2)^{-n/2} \left( \frac{\eta\rho^{1/\eta}}{2\sigma^{2/\eta}\Gamma(1/\eta)} \right)^q \exp \left\{-h(\bbeta_{\cal A})/\sigma^2 \right\},
\end{eqnarray*}
where $h(\bbeta_{\cal A})= \|\by-\bX_{\cal A} \bbeta_{\cal A}\|_2^2/2 + \rho\sum_{j \in \mathcal{A}}|\beta_j|^{\eta}$.
The Laplace approximation to the normalizing constant at an estimated model $\tilde A$ is then given by
\begin{equation*}
    \ln p(\by|\tilde {\cal A},\sigma^2) \approx \ln \pi(\tilde{\bbeta}_{\tilde {\cal A}},\by|\tilde {\cal A},\sigma^2)+\frac{q}{2}\ln(2\pi)-\frac{1}{2}\log \det [\sigma^{-2} d^{2} h(\tilde{\bbeta}_{\tilde {\cal A}})],
\end{equation*}
where $\tilde{\bbeta}_{\tilde {\cal A}}=\mathrm{argmin}_{\bbeta_{\tilde {\cal A}}}h(\bbeta_{\tilde {\cal A}})$ and $[d^{2}h(\bbeta_{\tilde {\cal A}})]_{jk}=\bx'_j\bx_k+\rho\eta(\eta-1)|\tilde \beta_j|^{\eta-2}1_{\{j=k\}}$ for $j,k \in \tilde {\cal A}$. Then
\begin{eqnarray*}
\ln p(\by|\tilde {\cal A},\sigma^2) &\approx& q \ln \frac{\sqrt{\pi}\eta\rho^{1/\eta}}{\sqrt{2}\Gamma(1/\eta)} - \left( \frac{n-q}{2} + \frac{q}{\eta} \right) \ln \sigma^2 - \frac{h(\tilde \bbeta_{\tilde {\cal A}}) }{\sigma^2} - \frac 12 \log \det d^{2}h(\tilde{\bbeta}_{\tilde {\cal A}}) - \frac n2 \ln 2\pi.
\end{eqnarray*}
Plugging in the maximizing $\sigma^2$
\begin{equation*}
\tilde{\sigma}^2 = \frac{h(\tilde \bbeta_{\tilde {\cal A}})}{(n-q)/2+q/\eta}
\end{equation*}
and omitting the constant term $(n\ln 2 \pi)/2$, we obtain
\begin{equation*}
\mbox{EB}(\rho,\eta) \equiv - q \ln \frac{\sqrt{\pi}\eta\rho^{1/\eta}}{\sqrt{2}\Gamma(1/\eta)} + \left( \frac{n-q}{2} + \frac{q}{\eta} \right) \left( 1 + \ln \frac{h(\tilde \bbeta_{\tilde {\cal A}})}{(n-q)/2+q/\eta} \right) + \frac 12 \log \det d^{2}h(\tilde{\bbeta}_{\tilde {\cal A}}).
\end{equation*}
\end{proof}

\section{Further Applications}
\label{sec:gen-reg}

The generality of (\ref{eqn:pen-obj}) invites numerous applications beyond variable selection. After reparameterization, many generalized regularization problems are subject to the path following and empirical Bayes model selection procedure developed in the previous two sections. In this section we briefly discuss some further applications.

The recent articles \citep{TibshiraniTaylor10GenLasso,ZhouLange11LSPath,ZhouWu11EPSODE} consider the generalized $\ell_1$ regularization problem
\begin{align*}
    \min_{\bbeta} f(\bbeta) + \rho \|\bV \bbeta\|_1 + \rho \|\bW \bbeta\|_+,
\end{align*}
where $\|\ba\|_+ = \sum_i \max \{a_i,0\}$ is the sum of positive parts of its components. The first regularization term enforces equality constraints among coefficients at large $\rho$ while the second enforces inequality constraints. Applications range from $\ell_1$ penalized GLMs, shape restricted regressions, to nonparametric density estimation. For more parsimonious and unbiased solutions, generalized sparse regularization can be proposed
\begin{align}
    \min_{\bbeta} f(\bbeta) + \sum_{i=1}^r P(|\bv_i^t \bbeta|,\rho) + \sum_{j=1}^s P_+(\bw_j^t \bbeta,\rho),   \label{eqn:general-sparse}
\end{align}
where $P$ is a non-convex penalty function (power, double Pareto, SCAD, MC+, etc.) and $P_+(t,\rho)=P(t,\rho)$ for $t \ge 0$ and $P(0,\rho)$ otherwise. Devising an efficient path algorithm for (\ref{eqn:general-sparse}) is hard in general. However, when $\{\bv_i\}$ and $\{\bw_j\}$ are linearly independent, it can be readily solved by our path algorithm via a simple reparameterization. For ease of presentation, we only consider equality regularization here. Let the matrix $\bV \in \mathbb{R}^{r \times p}$ collect $\bv_i$ in its rows. The assumption of full row rank of $\bV$ implies $r \le p$. The trick is to reparameterize $\bbeta$ by $\bgamma = \tilde \bV \bbeta$ where $\tilde \bV \in \mathbb{R}^{p \times p}$ is the matrix $\bV$ appended with extra rows such that $\tilde \bV$ has full column rank. Then original coefficients $\bbeta$ can be recovered from the reparameterized ones $\bgamma$ via $\bbeta = (\bV^t\bV)^{-1} \bV^t \bgamma$. The reparameterized regularization problem is given by
\begin{align}
    \min_{\bgamma} f[(\bV^t\bV)^{-1} \bV^t \bgamma] + \sum_{j=1}^r P(|\gamma_j|,\rho) \label{eqn:general-sparse-transformed}
\end{align}
which is amenable to Algorithm \ref{algo:path}. Note that $f$ remains convex and twice differentiable under affine transformation of variables.

Regularization matrix $\bV$ with full row rank appears in numerous applications. For fused lasso, the regularization matrix
\begin{align*}
    \bV_1 = \left( \begin{array}{rrrrr}
    -1 & 1 &  \\
     & \ddots & \ddots  \\
     & & -1 & 1
    \end{array}  \right)
\end{align*}
has full row rank. In polynomial trend filtering \citep{KimKohBoyd09TrendFiltering,TibshiraniTaylor10GenLasso}, order $d$ finite differences between successive regression coefficients are penalized. Fused lasso corresponds to $d=1$ and the general polynomial trend filtering invokes regularization matrix $\bV_d = \bV_{d-1} \bV_1$, which again has full row rank. In Section \ref{sec:MandA}, cubic trend filtering for logistic regression is demonstrated on a financial data set. In all of these applications the regularization matrix $\bV$ is highly sparse and structured. The back transformation $(\bV^t\bV)^{-1} \bV^t \bgamma$ in (\ref{eqn:general-sparse-transformed}) is cheap to compute using a pre-computed sparse Cholesky factor of $\bV^t\bV$. The design matrix in terms of variable $\bgamma$ is $\bX(\bV^t\bV)^{-1} \bV^t$. In contrast to the regular variable selection problem, it shall not be assumed to be centered and scaled.

\section{Examples}
\label{sec:examples}

Various numerical examples in this section illustrate the path seeking algorithm and empirical Bayes model selection procedure developed in this article. The first two classical data sets show the mechanics of the path following for linear and logistic regressions and compare the model fit and prediction performance under various penalties. The third example illustrates the application of path algorithm and empirical Bayes procedure to cubic trend filtering in logistic regression using a financial data set. The last simulation example evaluates the computational efficiency of the path algorithm in a large $p$ small $n$ setting. Run times are displayed whenever possible to indicate the efficiency of our path following algorithm. The algorithm is run on a laptop with Intel Core i7 M620 2.66GHz CPU and 8 GB RAM. For reproducibility the code for all examples is available on the first author's web site.

\subsection{Linear regression: Prostate cancer data}

The first example concerns the classical prostate cancer data in \citep{Hastie09ESLBook}. The response variable is logarithm of prostate specific antigen ({\tt lpsa}) and the seven predictors are the logarithm of cancer volume ({\tt lweight}), {\tt age}, the logarithm of the amount of benign prostatic hyperplasia ({\tt lbph}), seminal vesicle invasion ({\tt svi}), the logarithm of capsular penetration ({\tt lcp}), Gleason score ({\tt gleason}), and percent of Gleason scores 4 or 5 ({\tt pgg45}). The data set contains 97 observations and is split into a training set of size 67 and a test set of 30 observations. 

Figure \ref{fig:solpath-compare} displays the solution paths of linear regression with nine representative penalties on the training set. Discontinuities occur in the paths from power family with $\eta=0.5$, continuous log penalty, and the log penalty with $\eta=1$. In contrast, the lasso solution path, from either enet(1) or power(1), is continuous and piecewise linear. Figure \ref{fig:solpath-compare} also illustrates the trade-off between continuity and unbiasedness. Using convex penalties, such as enet $(\eta=1,1.5,2)$ and power $(\eta=1)$, guarantees the continuity of solution path but causes bias in the estimates along the solution path. For a non-convex penalty such as power $(\eta=0.5)$, estimates are approximately unbiased once selected. However this can only be achieved by allowing discontinuities along the path.
\begin{figure}[h!]
\begin{center}
\includegraphics[width=6.5in]{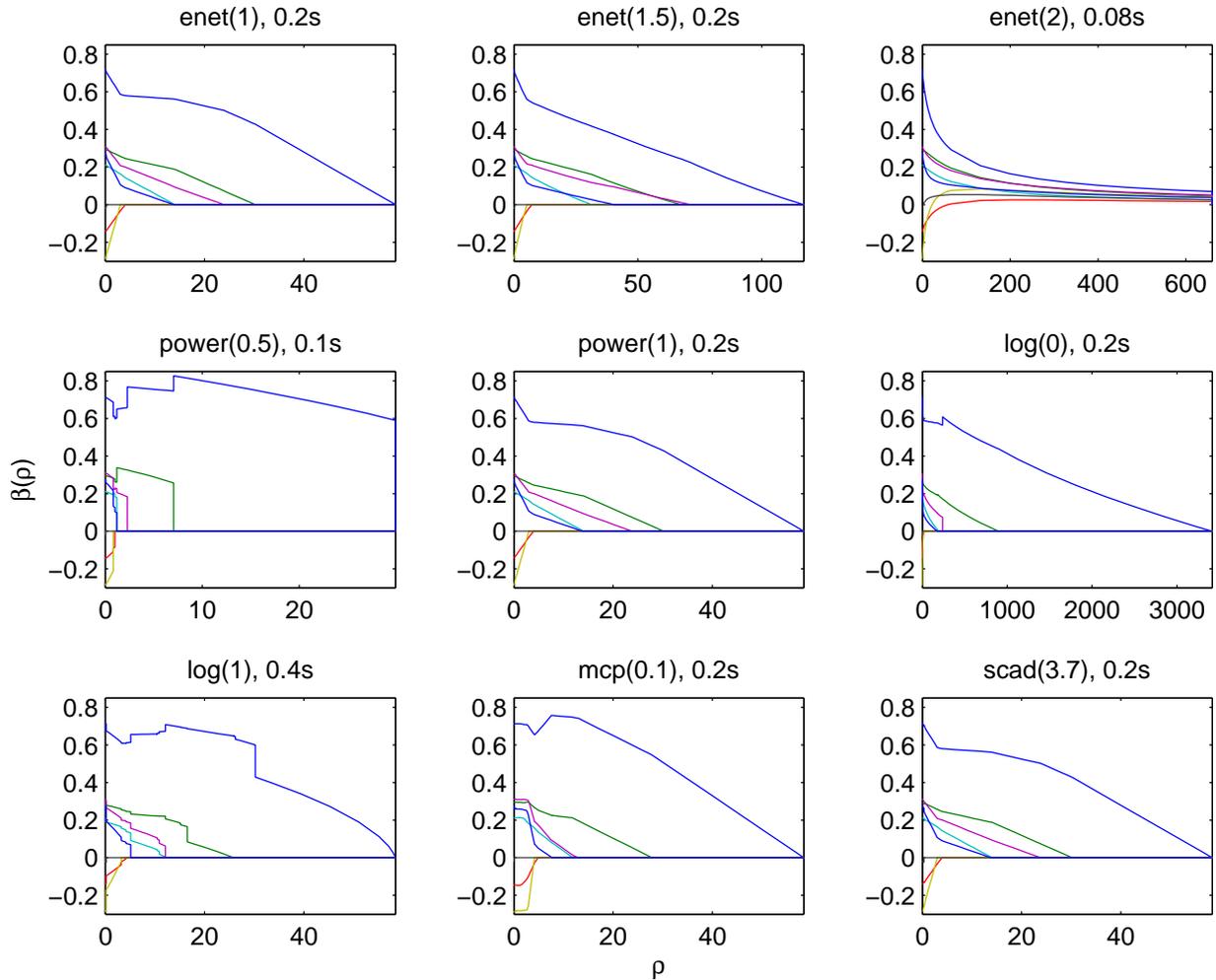}
\end{center}
\caption{Solution paths for the prostate cancer data.}
\label{fig:solpath-compare}
\end{figure}

To compare the model fit along the paths, it is more informative to plot the explained variation versus model dimension along the solution paths \citep{Friedman08GPS}. Upper panels of Figure \ref{fig:prostate-R2} display such plots for the enet, power, and log penalties at various penalty parameter values $\eta$. $y$-axis is the proportion $R^2(\rho)/R^2(0)$, i.e., the $R^2$ from the path solutions scaled by the maximum explained variation $R^2(0)$. Results for other penalties (MC+, SCAD) are not shown for brevity. Non-convex penalties show clear advantage in terms of higher explanatory power using fewer predictors. The model fit of path solutions in the test set shows similar patterns to those in Figure \ref{fig:DP-illustration}. To avoid repetition, they are not displayed here.

To evaluate the prediction performance, the prediction mean squared errors (MSE) on the test set from the solution paths are shown in the lower panels of Figure \ref{fig:prostate-R2}. Different classes of penalties all achieve the best prediction error of 0.45 with 4-6 predictors. It is interesting to note the highly concave penalties such as power ($\eta=0.2$) do not achieve the best prediction error along the path. Lasso and moderately concave penalties are quite competitive in achieving the best prediction error along the path. Convex penalties like enet with $\eta>1$ tend to admit too many predictors without achieving the best error rate.

\begin{figure}[h!]
\begin{center}
$$
\begin{array}{cc}
\includegraphics[width=2.5in]{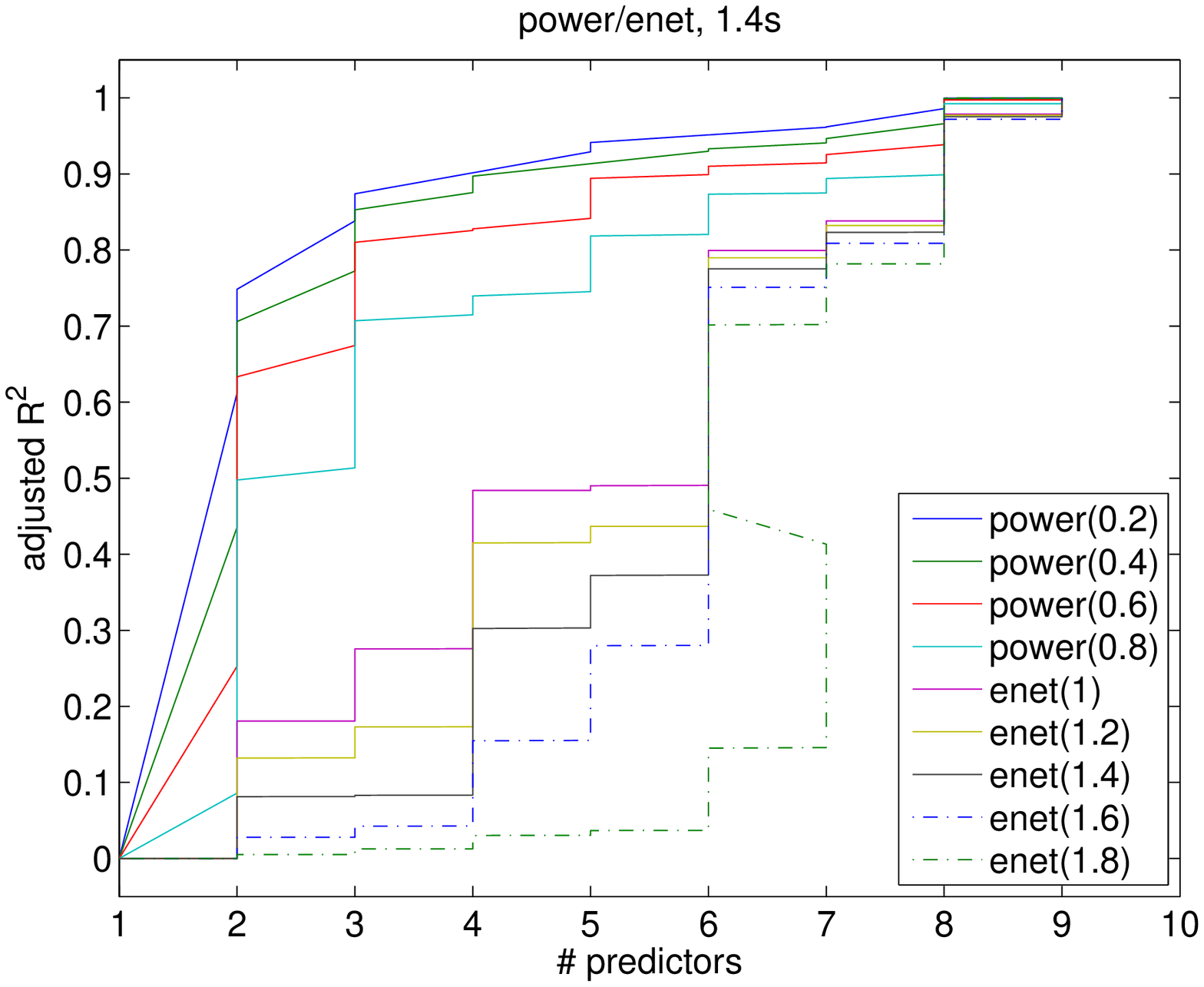} & \includegraphics[width=2.5in]{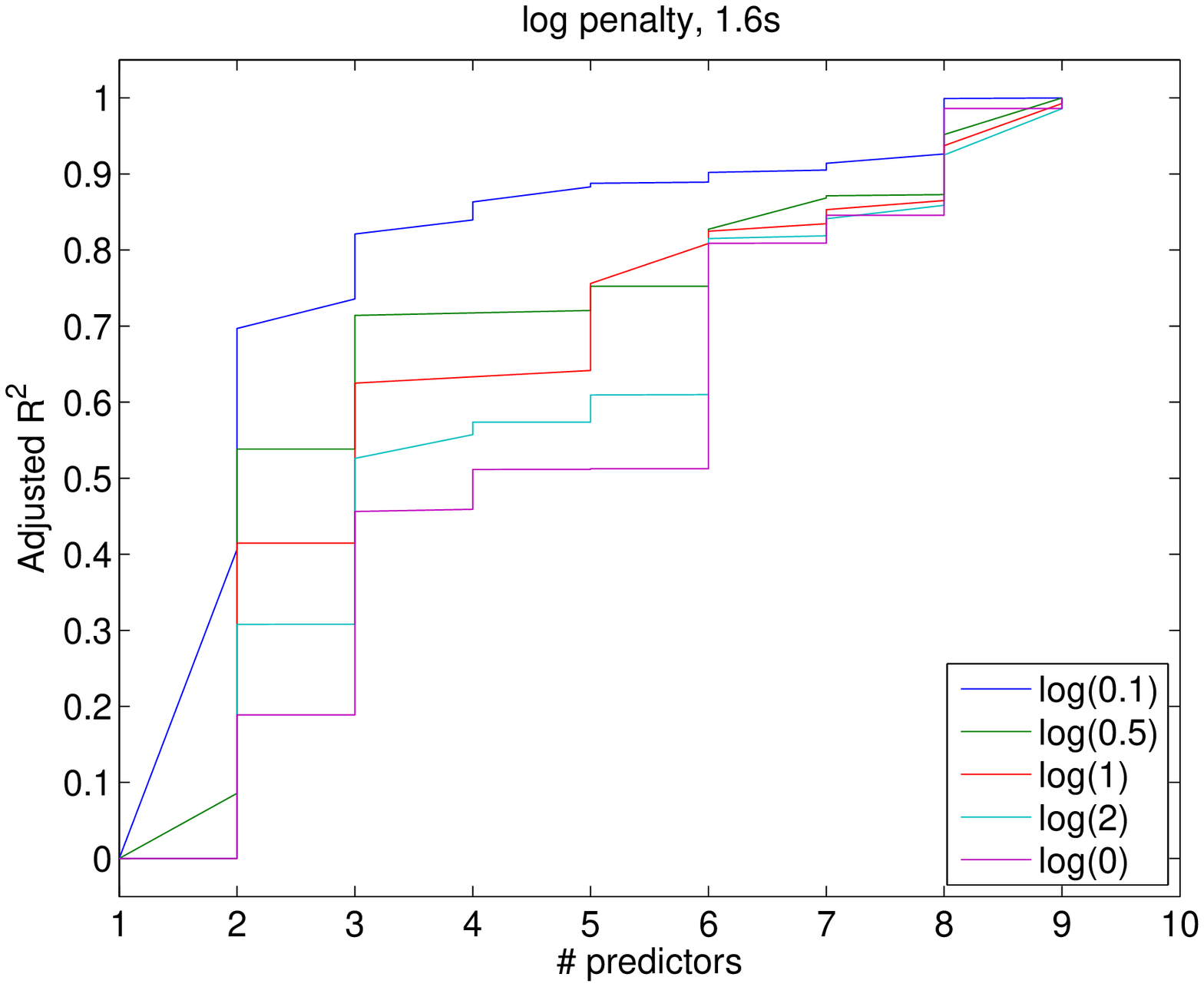} \\
\includegraphics[width=2.5in]{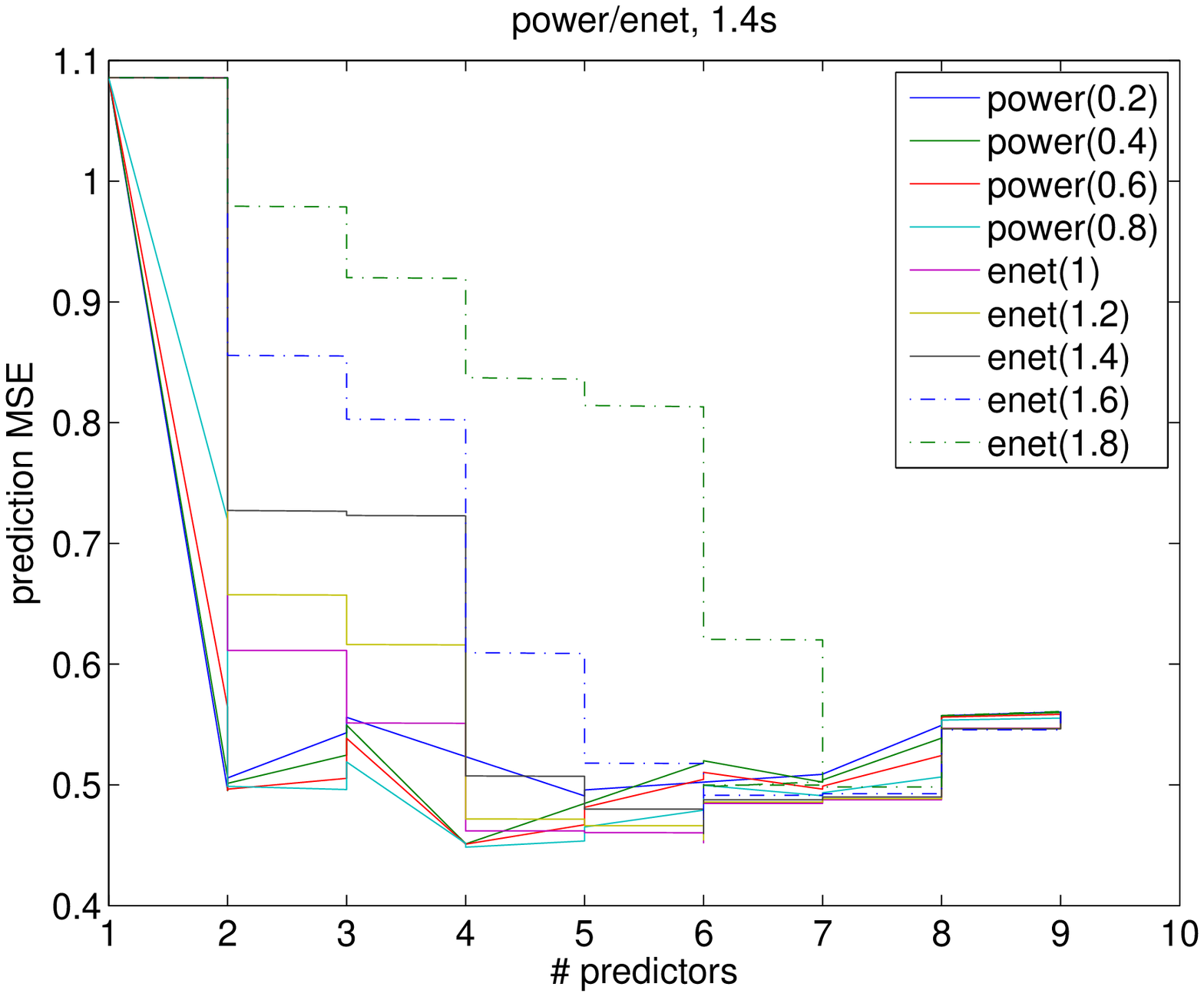} & \includegraphics[width=2.5in]{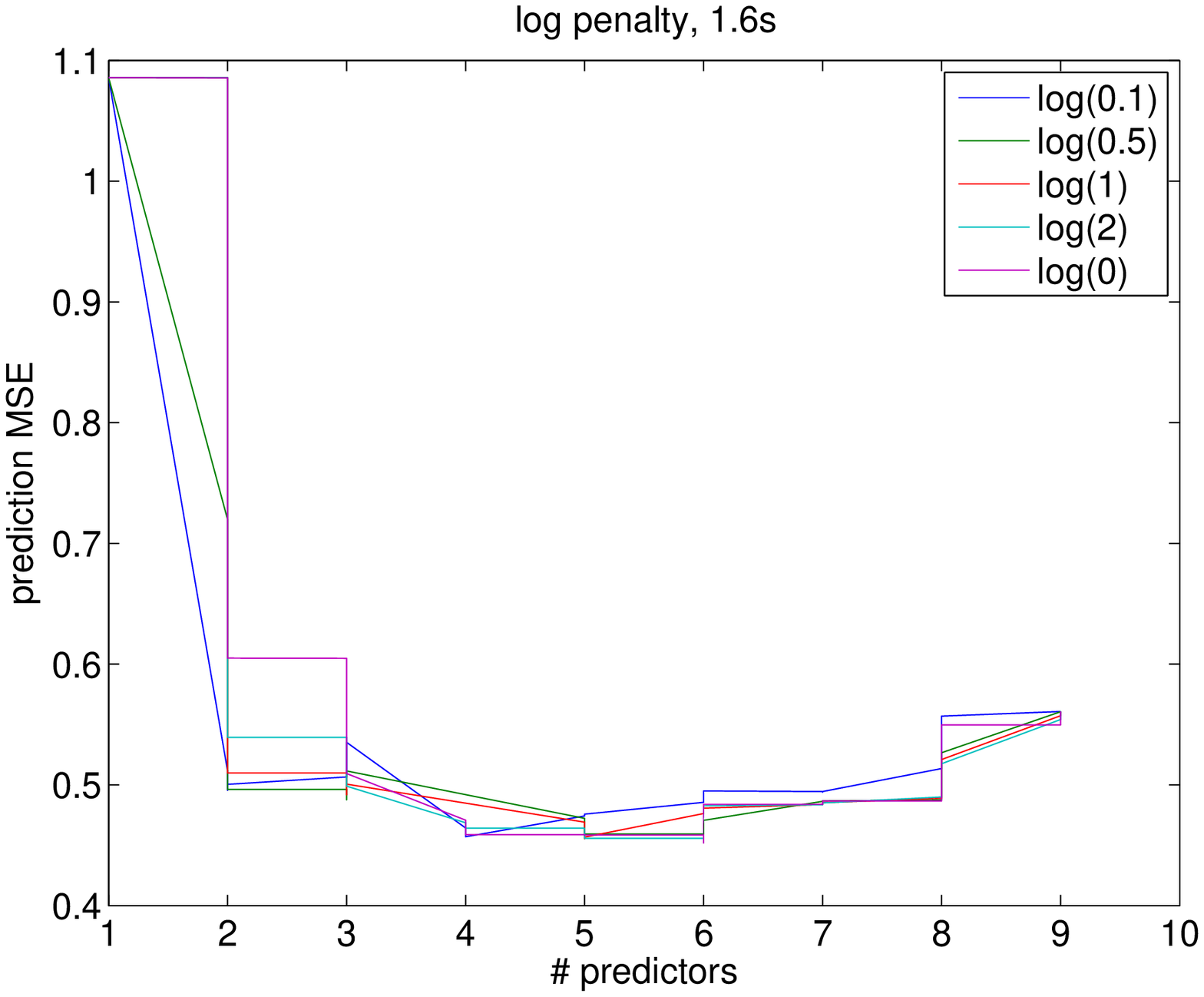}
\end{array}
$$
\end{center}
\caption{Upper panels: $R^2$ vs model dimension from various penalties for the prostate cancer data. Lower panels: Prediction mean square error (MSE) vs model dimension from various penalties for the prostate cancer data.}
\label{fig:prostate-R2}
\end{figure}

\subsection{Logistic regression: South Africa heart disease data}

For demonstration of logistic regression, we again use the classical South Africa heart disease data set in \citep{Hastie09ESLBook}. This data set has $n=462$ observations measured on 7 predictors. The response variable is binary (heart disease or not). We split the data set into a training set with 312 data points and a test set with 150 data points. Solution paths are obtained for the training data set from various penalties and are displayed in Figure \ref{fig:saheart-solpath}. Similar patterns are observed as those for the prostate cancer linear regression example. The discontinuities for concave penalties such as power ($\eta=0.5$) and log penalty ($\eta=1$) lead to less biased estimates along the paths. The plots of explained deviance versus model size for selected penalties are given in the upper panels of Figure \ref{fig:saheart-dev}. Solutions from concave penalties tend to explain more deviance with fewer predictors than lasso and enet with $\eta>1$. Deviance plots for the test set show a similar pattern. Prediction power of the path solutions is evaluated on the test data set and the prediction MSEs are reported in the lower panels of Figure \ref{fig:saheart-dev}. The highly concave penalties such as power $(\eta<1)$ and log penalty $(\eta=0.1)$ are able to achieve the best prediction error rate 0.425 with 5 predictors. Convex penalties and less concave ones perform worse in prediction power, even with more than 5 predictors.

\begin{figure}[h!]
\begin{center}
\includegraphics[width=6.5in]{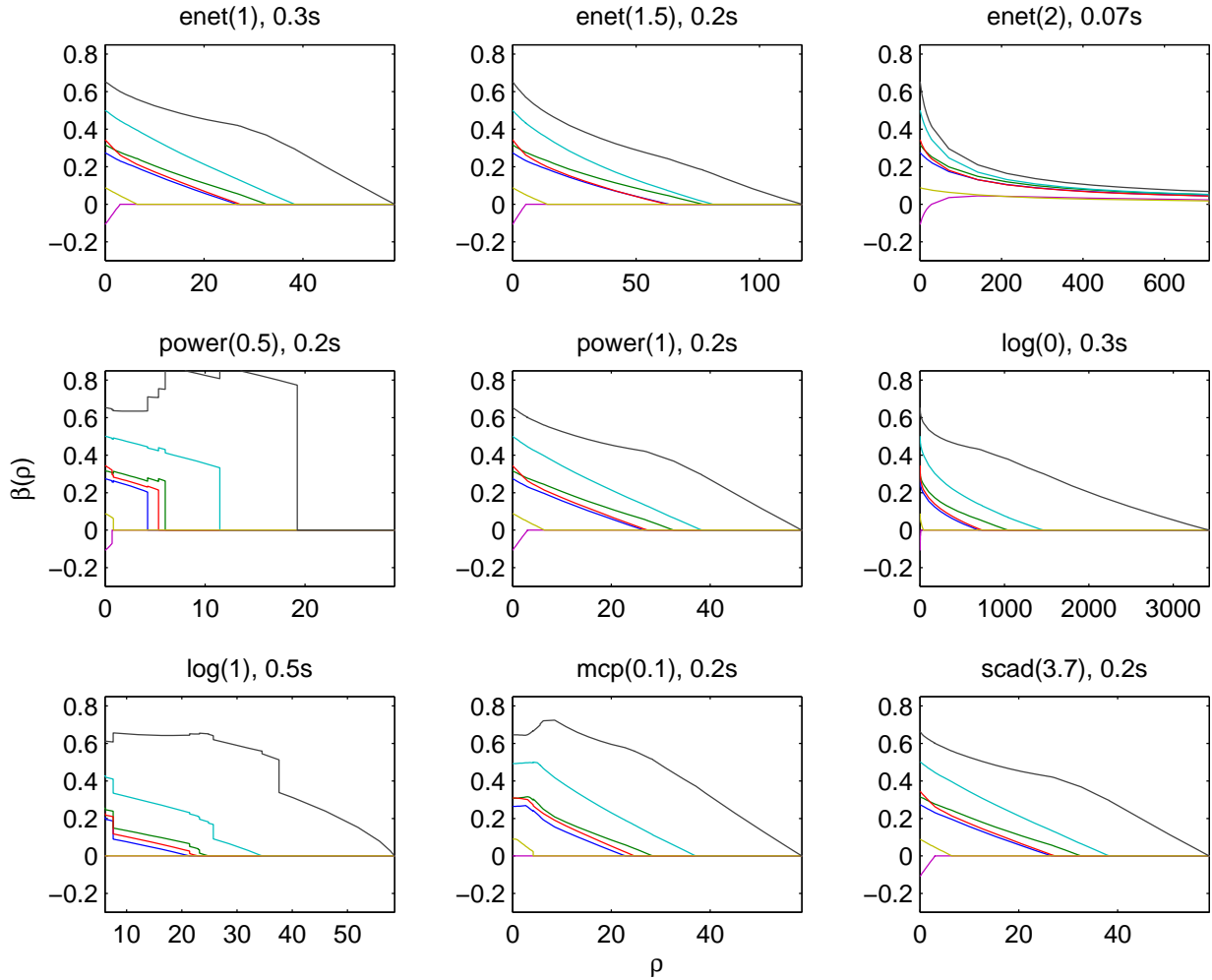}
\end{center}
\caption{Solution paths for the South Africa heart disease data.}
\label{fig:saheart-solpath}
\end{figure}

\begin{figure}[h!]
\begin{center}
$$
\begin{array}{cc}
\includegraphics[width=2.5in]{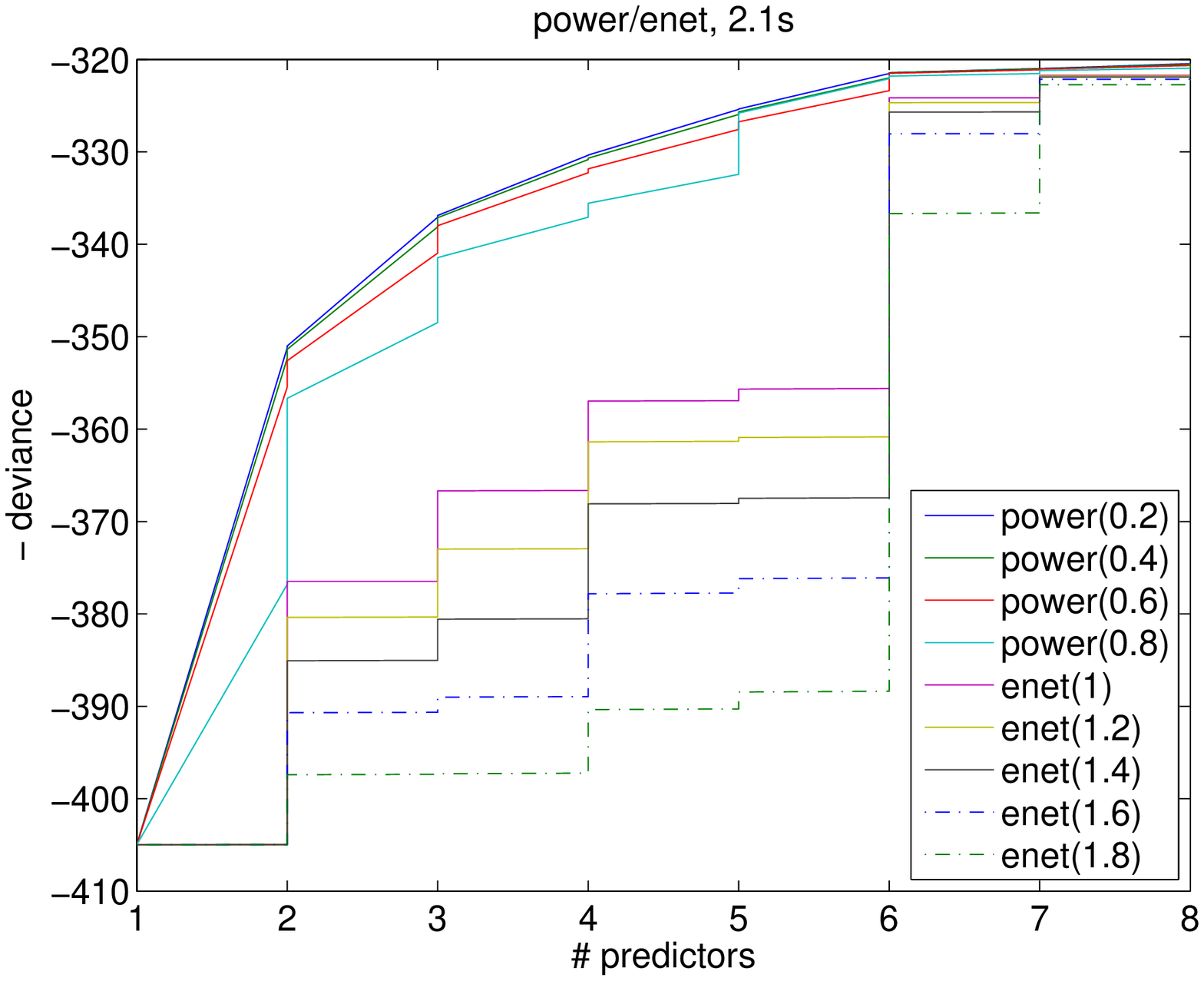} & \includegraphics[width=2.5in]{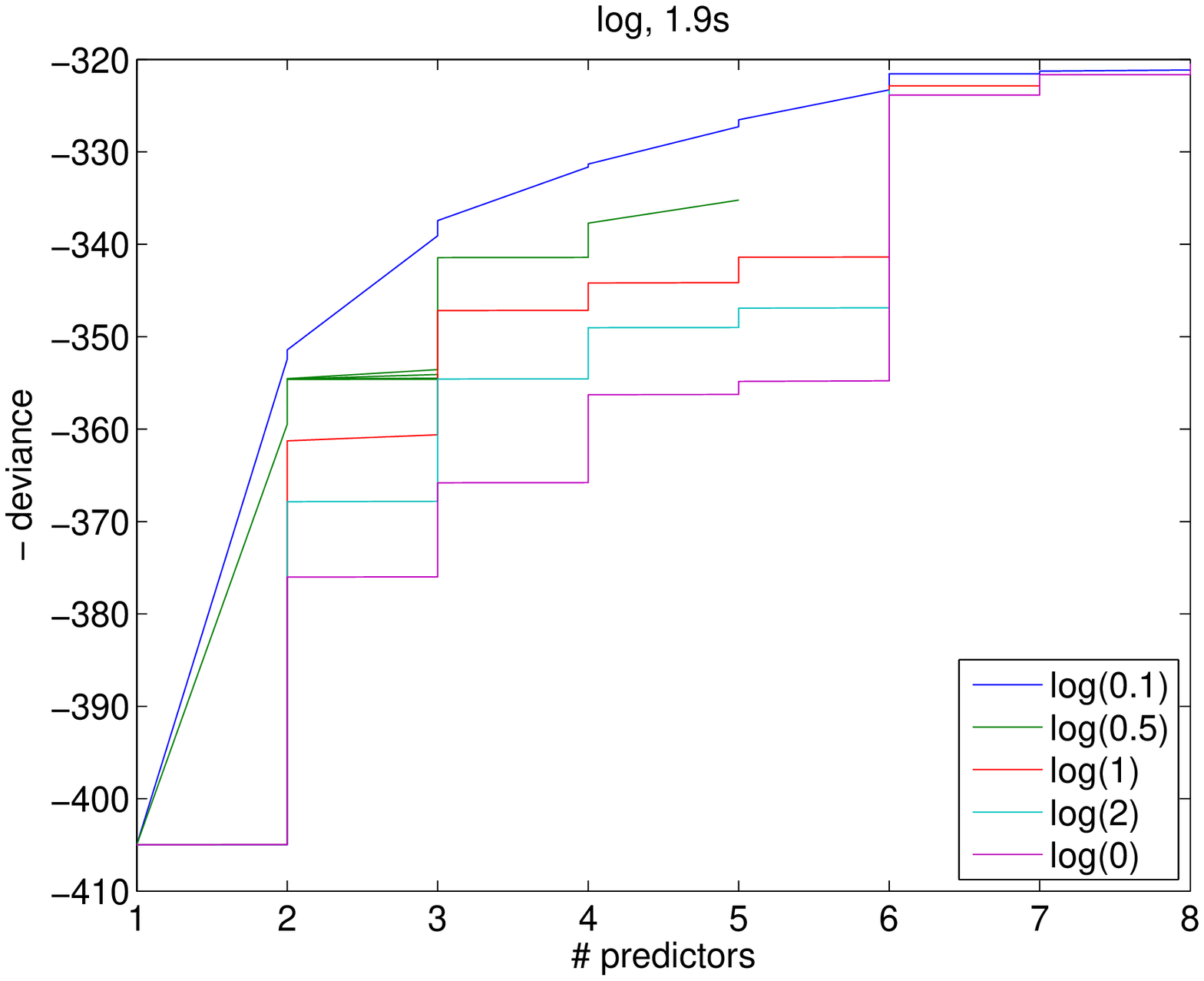} \\
\includegraphics[width=2.5in]{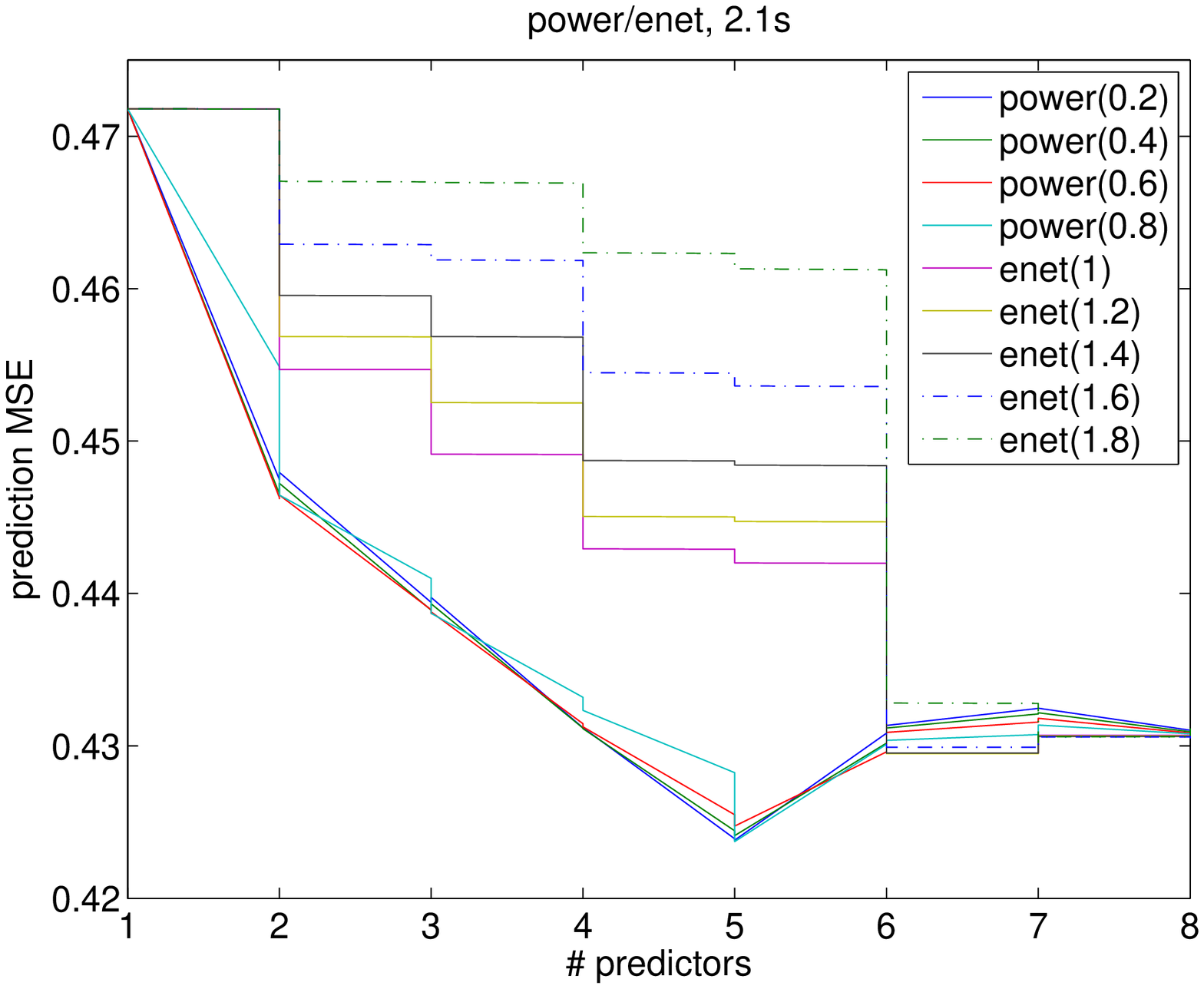} & \includegraphics[width=2.5in]{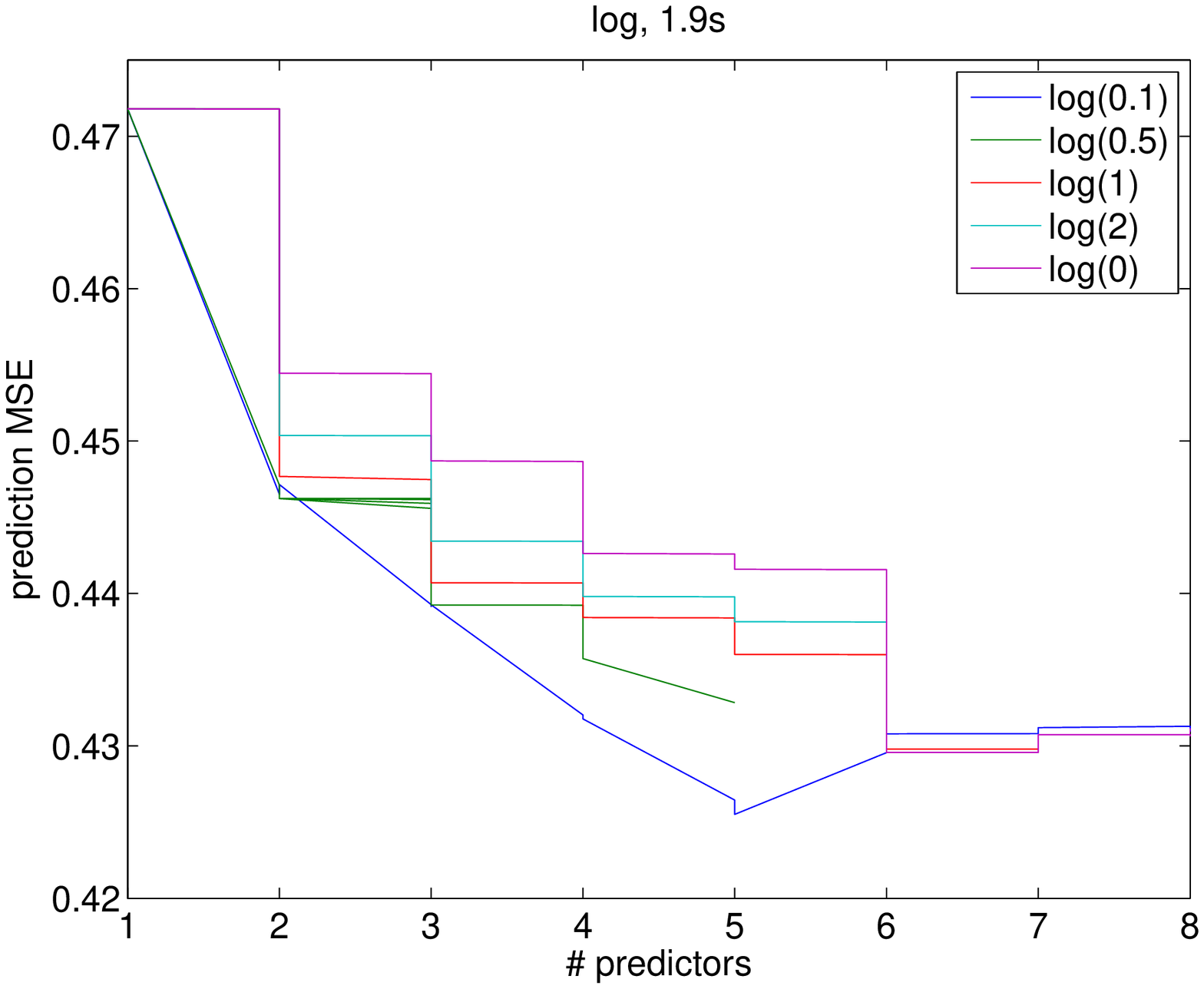}
\end{array}
$$
\end{center}
\caption{Upper panels: Negative deviance vs model dimension from various penalties for the South Africa heart disease data. Lower panels: Prediction mean square error (MSE) vs model dimension from various penalties for the South Africa heart disease data.}
\label{fig:saheart-dev}
\end{figure}

\subsection{Logistic regression with cubic trend filtering: M\&A data}
\label{sec:MandA}

The third example illustrates generalized regularization with logistic regression. We consider a merger and acquisition (M\&A) data set studied in recent articles \citep{ZhouWu11EPSODE,FanMaityWangWu11ManA}. This data set constitutes $n=1,371$ US companies with a binary response variable indicating whether the company becomes a leveraged buyout (LBO) target ($y_i=1$) or not ($y_i=0$). Seven covariates are recorded for each company. Table \ref{table:MandA-info} lists the 7 predictors and their p-values in the classical linear logistic regression. Predictors `long term investment', `log market equity', and `return S\&P 500 index' show no significance while the finance theory indicates otherwise.
\begin{table}[h!]
\begin{center}
\begin{tabular}{ll}
\toprule
Predictor & p-value  \\
\midrule
Cash Flow & 0.0019  \\
Case & 0.0211   \\
Long Term Investment & 0.5593 \\
Market to Book Ratio & 0.0000 \\
Log Market Equity & 0.5099  \\
Tax & 0.0358 \\
Return S\&P 500 Index & 0.2514  \\
\bottomrule
\end{tabular}
\end{center}
\caption{Predictors and their p-values from the linear logistic regression.}
\label{table:MandA-info}
\end{table}

To explore the possibly nonlinear effects of these quantitative covariates, we discretize each predictor into, say, 10 bins and fit a logistic regression. The first bin of each predictor is used as the reference level and effect coding is applied to each discretized covariate. The circles (o) in Figure \ref{fig:MandA-estimates} denote the estimated coefficients for each bin of each predictor and hint at some interesting nonlinear effects. For instance, the chance of being an LBO target seems to monotonically decease with market-to-book ratio and be quadratic as a function of log market equity. Regularization can be utilized to borrow strength between neighboring bins. The recent paper \citep{ZhouWu11EPSODE} applies cubic trend filtering to the 7 covariates using $\ell_1$ regularization. Here we demonstrate a similar regularization using a non-convex penalty. Specifically, we minimize a regularized negative logistic log-likelihood of the form
\begin{align*}
    - l(\bbeta_1,\ldots,\bbeta_7) + \sum_{j=1}^7 P_\eta(\bV_j \bbeta_j,\rho),
\end{align*}
where $\bbeta_j$ is the vector of regression coefficients for the $j$-th discretized covariate. The matrices in the regularization terms are specified as
\begin{align*}
    \bV_j &= \left( \begin{array}{rrrrrrrr}
    -1 & 2 & -1 \\
    1 & -4 & 6 & -4 & 1 \\
      & 1 & -4 & 6 & -4 & 1 &   \\
      & & & \ddots & \ddots & \ddots  \\
    &  & 1 & -4 & 6 & -4 & 1 &  \\
      & & & & & -1 & 2 & -1
    \end{array}  \right),
\end{align*}
which penalizes the fourth order finite differences between the bin estimates. Thus, as $\rho$ increases, the coefficient vectors for each covariate tend to be piecewise cubic with two ends being linear, mimicking the natural cubic spline. This is one example of polynomial trend filtering \citep{KimKohBoyd09TrendFiltering,TibshiraniTaylor10GenLasso} applied to logistic regression. Similar to semi-parametric regressions, regularizations in polynomial trend filtering `let the data speak for themselves'. In contrast, the bandwidth selection in semi-parametric regression is replaced by parameter tuning in regularizations. The number and locations of knots are automatically determined by the tuning parameter which is chosen according to a model selection criterion. The left panel of Figure \ref{fig:MandA-path} displayed the solution path delivered by the power penalty with $\eta=0.5$. It bridges the unconstrained estimates (denoted by o) to the constrained estimates (denoted by +). The right panel of Figure \ref{fig:MandA-path} shows the empirical Bayes criterion along the path. The dotted line in Figure \ref{fig:MandA-estimates} is the solution with smallest empirical Bayes criterion. It mostly matches the fully regularized solution except a small `dip' in the middle range of `log market equity'. The classical linear logistic regression corresponds to the restricted model where all bins for a covariate coincide. A formal analysis of deviance indicates that the regularized model at $\rho=2.5629$ is significant with respect to this null model with p-value 0.0023.

The quadratic or cubic like pattern in the effects of predictors `long term investment', `log market equity', and `return S\&P 500 index'  revealed by the regularized estimates explain why they are missed by the classical linear logistic regression. These patterns match some existing finance theories. For instance, Log of market equity is a measure of company size. Smaller companies are unpredictable in their profitability and extremely large companies are unlikely to be an LBO target because LBOs are typically financed with a large proportion of external debt. A company with a low cash flow is unlikely to be an LBO target because low cash flow is hard to meet the heavy debt burden associated with the LBO. On the other hand, a company carrying a high cash flow is likely to possess a new technology. It is risky to acquire such firms because it is hard to predict their profitability. The tax reason is obvious from the regularized estimates. The more tax the company is paying, the more tax benefits from an LBO.

\begin{figure}[h!]
\begin{center}
$$
\begin{array}{cc}
\includegraphics[width=3in]{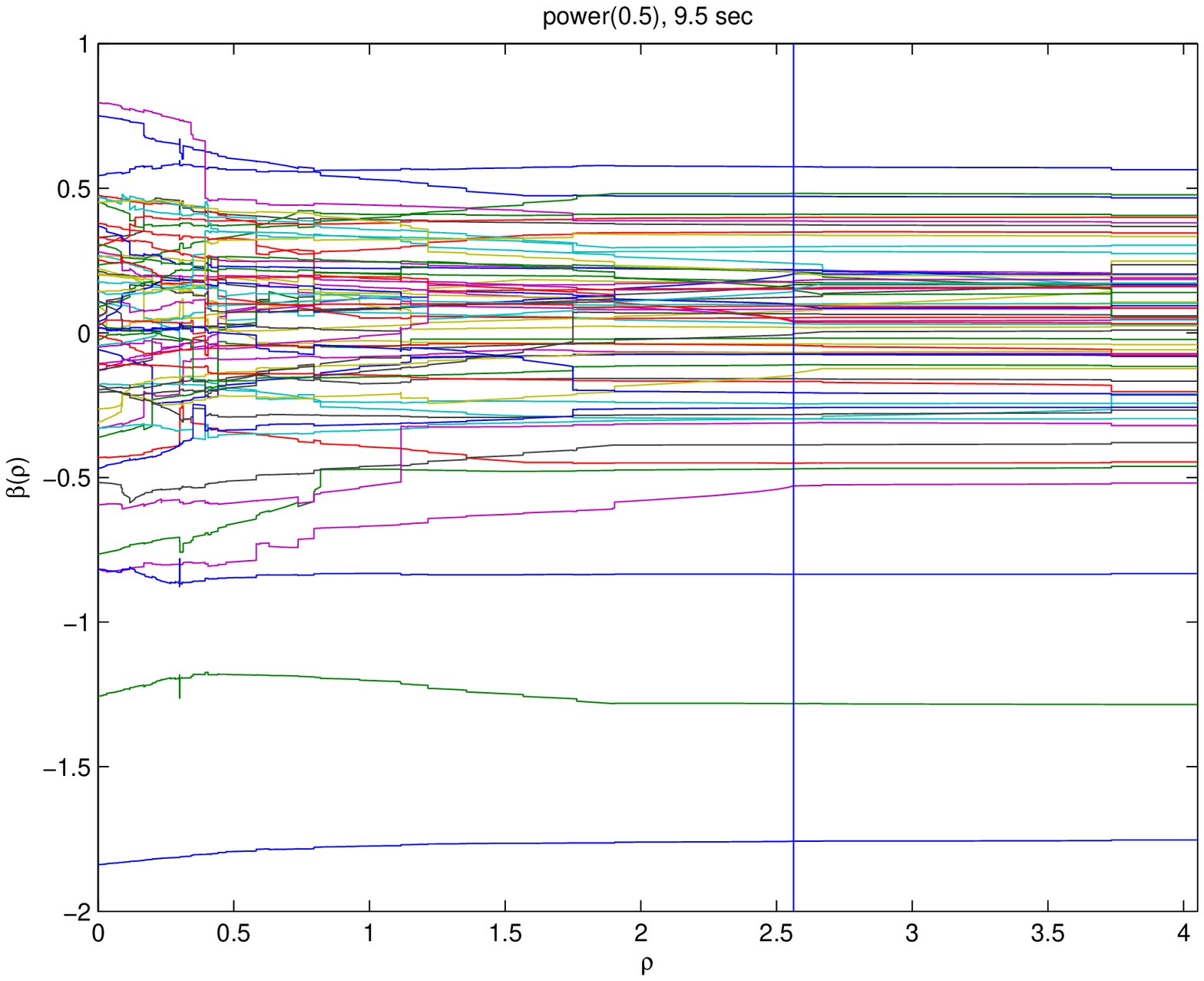} & \includegraphics[width=3in]{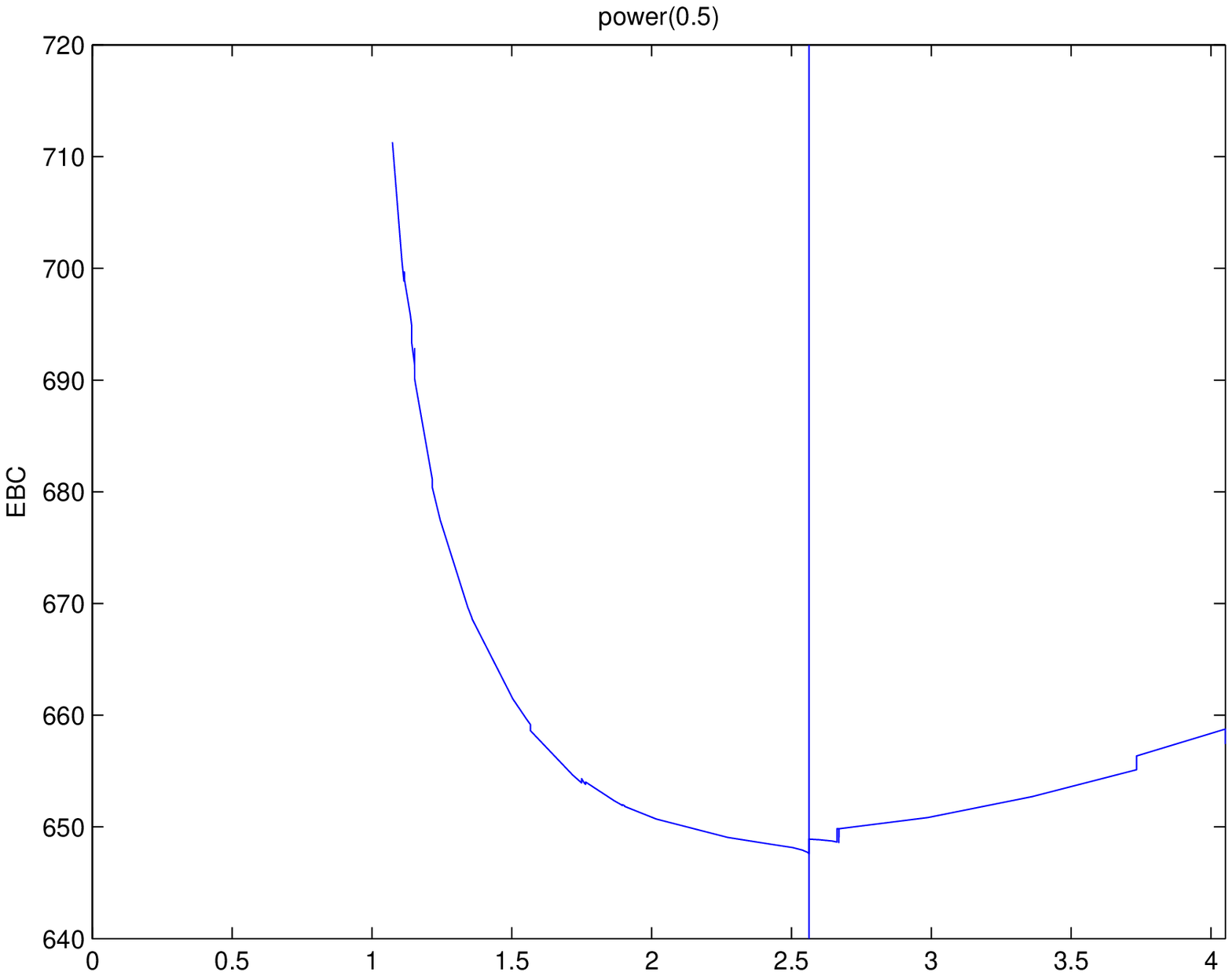}
\end{array}
$$
\caption{Regularized logistic regression on the M\&A data. Left: The trajectory of solution path from power penalty with $\eta=0.5$. Right: Empirical Bayes criterion along the path. Vertical lines indicate the model selected by the empirical Bayes criterion.}
\label{fig:MandA-path}
\end{center}
\end{figure}

\begin{figure}[h!]
\begin{center}
$$
\begin{array}{cc}
\includegraphics[width=6in]{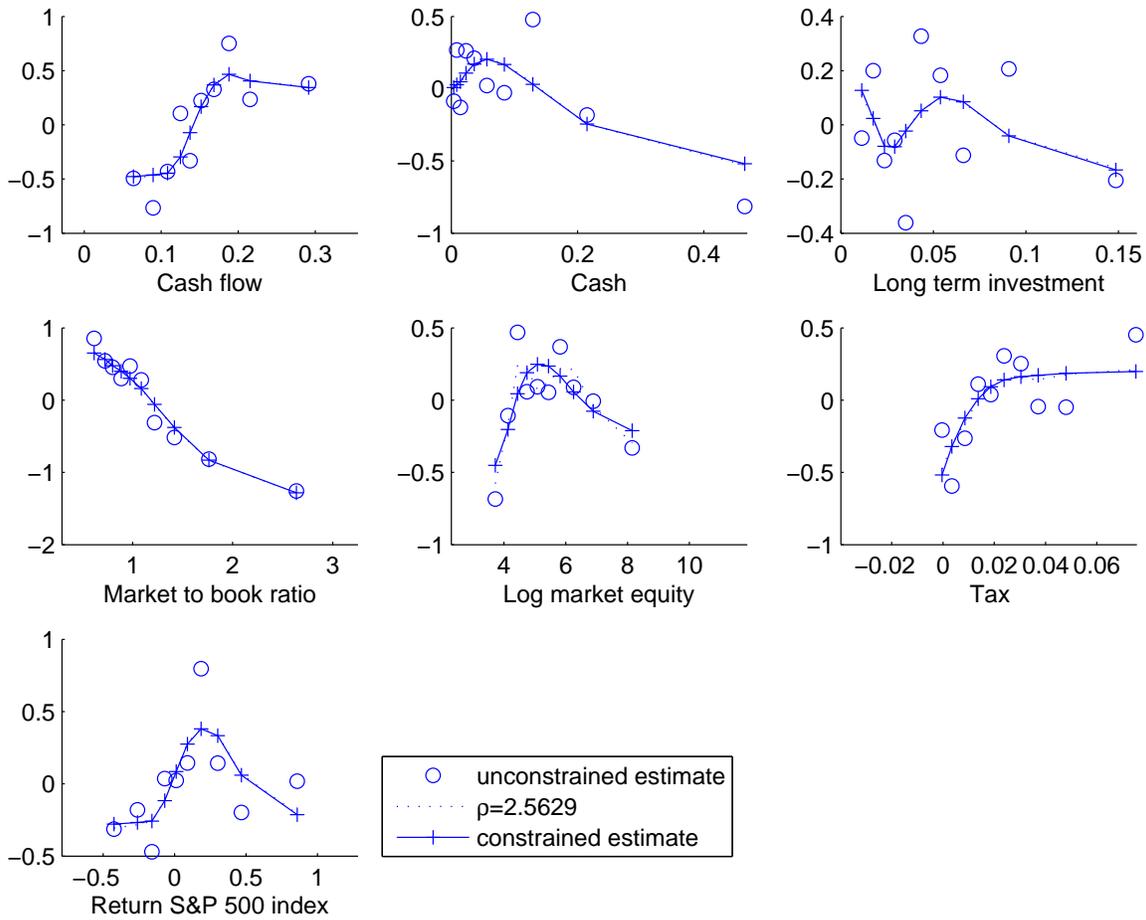}
\end{array}
$$
\caption{Snapshots of the path solution to the regularized logistic regression on the M\&A data set. The best model according to empirical Bayes criterion (dotted line) most matches the fully regularized solution (line with crosses) except that it has a dip in the middle part of the `log market equity' variable.}
\label{fig:MandA-estimates}
\end{center}
\end{figure}

\subsection{GLM sparse regression: large $p$, small $n$}
\label{sec:GLM-n200-p10000}

In all of the previous examples, the number of observations $n$ exceeds the number of parameters $p$. Our final simulation example evaluates the performance of the path following algorithm and empirical Bayes procedure in a large $p$ small $n$ setup for various generalized linear models (GLM). In each simulation replicate, $n=200$ independent responses $y_i$ are simulated from a normal (with unit variance), Poisson, and binomial distribution with mean $\mu_i$ respectively. The mean $\mu_i$ is determined by a $p=10,000$ dimensional covariate $\bx_i$ through a link function $g(\mu_i) = \alpha + \bx_i^t \bbeta$ where $\alpha$ is the intercept. Canonical links are used in the simulations. For the linear model, $g(\mu)=\mu$. For the Poisson model, $g(\mu) = \ln \mu$. For the logistic model, $g(\mu) = \log [\mu/(1-\mu)]$.

Numerous settings can be explored in this framework. For brevity and reproducibility, we only display the results for a simple exemplary setup: entries of covariate $\bx_i$ are generated from iid standard normal and the true regression coefficients are $\beta_i=3$ for $i=1,\ldots,5$, $\beta_i=-3$ for $i=6,\ldots,10$, and $\alpha=\beta_i=0$ for $i=11,\ldots,10,000$. Results presented in Figure \ref{fig:n200-p10000-timing}-\ref{fig:n100-p10000-mse} are based on 100 simulation replicates. In each replicate, path following is carried out under linear, Poisson, and logistic regression models coupled with power penalties at $\eta=0.25, 0.5, 0.75, 1$, representing a spectrum from (nearly) best subset regression to lasso regression. Results for other penalties are not shown to save space. Path following is terminated when at least 100 predictors are selected or separation is detected in the Poisson or logistic models, whichever occurs first. Results for the logistic sparse regression have to be interpreted with caution due to frequent occurrence of complete separation along solution paths. This is common in large $p$ small $n$ problems as the chance of finding a linear combination of a few predictors that perfectly predicts the $n=200$ binary responses is very high when there are $p=10,000$ candidate covariates. Therefore the results for logistic regression largely reflect the quality of solutions when path following terminates at complete separation.

Figure \ref{fig:n200-p10000-timing} displays the boxplots of run times at different combinations of the GLM model and penalty value $\eta$. The majority of runs take less than one minute across all models, except Poisson regression with $\eta=0.75$. The run times in this setting display large variability with a median around 50 seconds. Logistic regression with non-convex penalties ($\eta=0.25, 0.5, 0,75$) has shorter run times than lasso penalty ($\eta=1$) due to complete separation at early stages of path following.

Figures \ref{fig:n100-p10000-fpr} and \ref{fig:n100-p10000-fnr} display the false positive rate (FPR) and false negative rate (FNR) of the model selected by the empirical Bayes procedure at different combinations of GLM model and penalty value $\eta$. FPR (type I error rate) is defined as the proportion of false positives in the selected model among all true negatives (9990 in this case). FNR (type II error rate) is defined as the proportion of false negatives in the selected model among all true positives (10 in this case). These two numbers give a rough measure of model selection performance. For all three GLM models, power penalties with larger $\eta$ (close to convexity) tend to select more predictors, leading to significantly higher FPR. On the other hand, the median FNR appears not significantly improved in larger $\eta$ cases, although they admit more predictors. This indicates the overall improved model selection performance of non-convex penalties.

More interesting is the mean square error (MSE) of the parameter estimate $\tilde \bbeta$ at the model selected by the empirical Bayes procedure. MSE is defined as [$\sum_j (\tilde \beta_j - \beta)^2 / p]^{1/2}$. Figure \ref{fig:n100-p10000-mse} shows that lasso ($\eta=1$) has risk properties comparable to the non-convex penalties, although it is a poor model selector in terms of FPR and FNR.

We should keep in mind that these results are particular to the specific simulation setting we presented here and may vary across numerous factors such as pairwise correlations between the covariates, signal to noise ratio, sample size $n$ and dimension $p$, penalty type, etc. We hope that the tools developed in this article facilitate such comparative studies. The generality of our method precludes extensive numerical comparison with current methods as only a few software packages are available for the special cases of \eqref{eqn:pen-obj}. In supplementary materials, we compare the run times of our algorithm to that of \citet{FriedmanHastieTibshirani10GLMCD} for the special case of lasso penalized GLM.

\begin{figure}[!ht]
\begin{center}
$$
\begin{array}{c}
\includegraphics[width=4.5in]{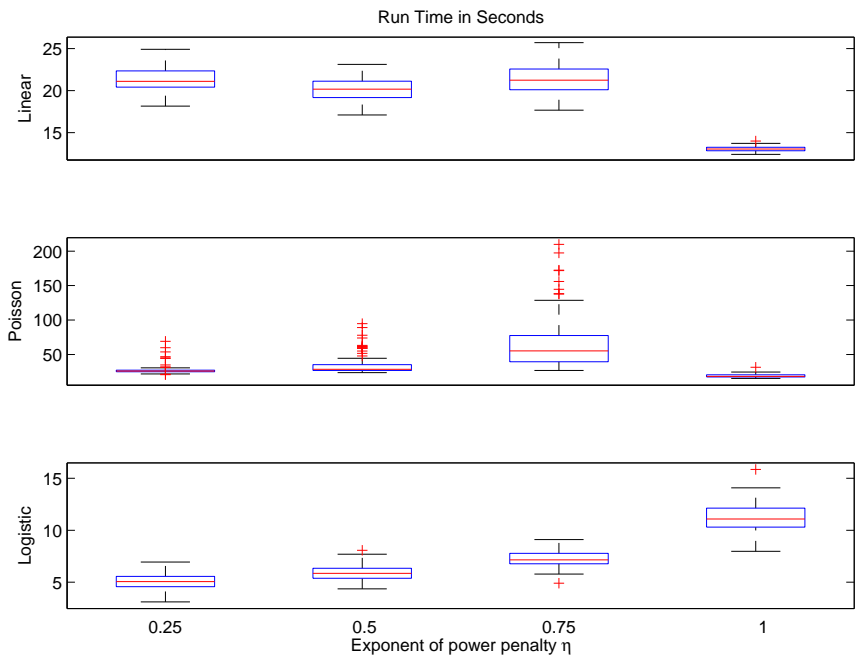}
\end{array}
$$
\caption{Run times of path following for GLM sparse regression with power penalties from 100 replicates. Problem size is $n=200$ and $p=10,000$.}
\label{fig:n200-p10000-timing}
\end{center}
\end{figure}

\begin{figure}[h!]
\begin{center}
$$
\begin{array}{c}
\includegraphics[width=4in]{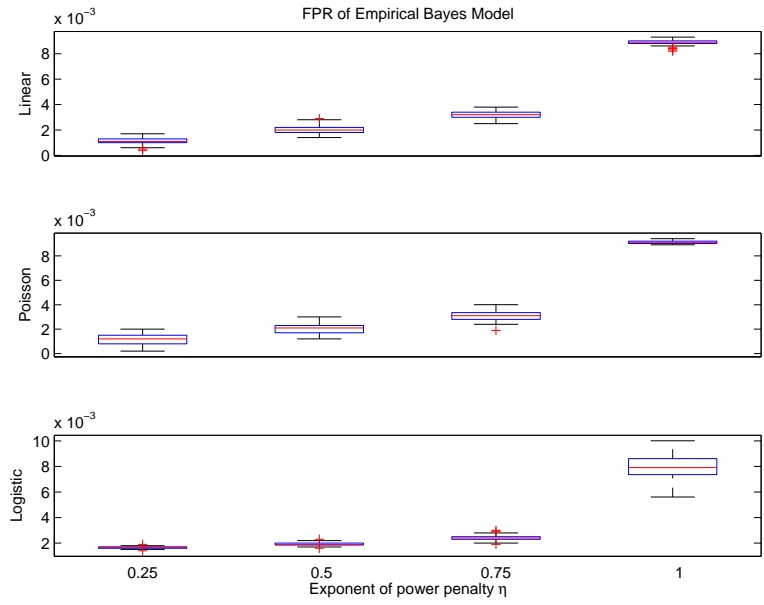}
\end{array}
$$
\caption{False positive rate (FPR) of the GLM sparse model selected by the empirical Bayes criterion.}
\label{fig:n100-p10000-fpr}
\end{center}
\end{figure}

\begin{figure}[h!]
\begin{center}
$$
\begin{array}{c}
\includegraphics[width=4in]{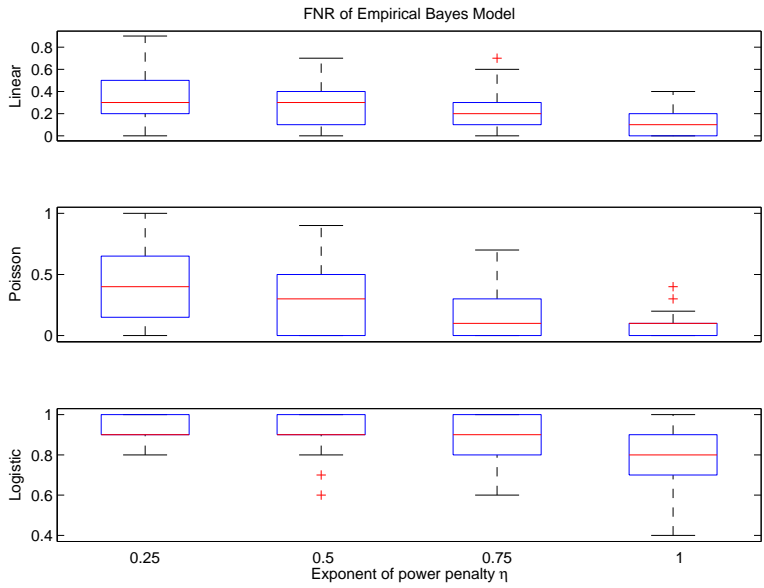}
\end{array}
$$
\caption{False negative rate (FNR) of the GLM sparse model selected by the empirical Bayes criterion.}
\label{fig:n100-p10000-fnr}
\end{center}
\end{figure}

\begin{figure}[h!]
\begin{center}
$$
\begin{array}{c}
\includegraphics[width=4in]{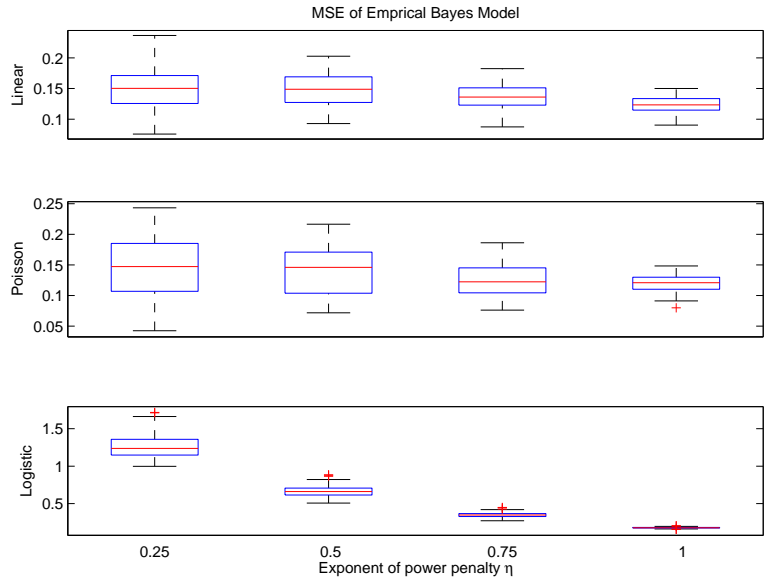}
\end{array}
$$
\caption{Mean square error (MSE) of the parameter estimate from the model selected by the empirical Bayes criterion.}
\label{fig:n100-p10000-mse}
\end{center}
\end{figure}

\section{Discussion}
\label{sec:discussion}

In this article we propose a generic path following algorithm for any combination of a convex loss function and a penalty function that satisfies mild conditions. Although motivated by the unpublished work by \citet{Friedman08GPS}, our algorithm turns out to be different from his general path seeking (GPS) algorithm. Further research is needed on the connection between the two. The ODE approach for path following tracks the solution smoothly and avoids the need to choose a fixed step size as required by most currently available regularization path algorithms.

Motivated by a shrinkage prior in the Bayesian setting, we derived an empirical Bayes procedure that allows quick search for a model and the corresponding tuning parameter along the solution paths from a large class of penalty functions. All necessary quantities for the empirical Bayes procedure naturally arise in the path algorithm.

Besides sparse regression, simple reparameterization extends the applicability of the path algorithm to many more generalized regularization problems. The cubic trend filtering example with the M\&A data illustrates the point.

Our numerical examples illustrate the working mechanics of the path algorithm and properties of different penalties. A more extensive comparative study of the penalties in various situations is well deserved. The tools developed in this article free statisticians from the often time consuming task of developing optimization algorithms for specific loss and penalty combination. Interested readers are welcome to use the {\tt SparseReg} toolbox freely available on the first author's web site.


\bibliographystyle{Chicago}
\bibliography{pareto-notes}

\newpage

\section*{Supplementary Materials}

\subsection*{Derivatives of penalty functions}

First two derivatives of commonly used penalty functions are listed in Table \ref{table:pen}.
\begin{table}[h]
\begin{center}
{\small
\begin{tabular}{lrr}
\toprule
Penalty Function & $P_\eta(|\beta|,\rho)$ & $\frac{\partial P_\eta(|\beta|,\rho)}{\partial|\beta|}$  \\
\midrule
Power Family $\eta \in (0,2]$ & $\rho |\beta|^{\eta}$ & $\rho \eta |\beta|^{\eta-1}$  \\
Elastic Net, $\eta \in [1,2]$ & $\rho[(\eta-1) \beta^2 / 2 + (2-\eta) |\beta|]$ & $\rho[(\eta-1)|\beta|+(2-\eta)]$  \\
Log, $\eta>0$ & $\rho \ln (\eta + |\beta|)$ & $\rho (\eta+|\beta|)^{-1}$ \\
Continuous Log & $\rho \ln (\sqrt{\rho}+|\beta|)$ &
    $\rho(\sqrt{\rho} + |\beta|)^{-1}$   \\
SCAD, $\eta>2$ & See (\ref{eqn:SCAD-density}) & $\rho \left\{ 1_{\{|\beta| \le \rho\}} + \frac{(\eta\rho - |\beta|)_+}{(\eta-1)\rho} 1_{\{|\beta| > \rho\}} \right\}$  \\
MC+, $\eta\ge 0$ & See (\ref{eqn:MCP-density}) & $\rho \left(1 - \frac{|\beta|}{\rho \eta} \right)_+$  \\
\bottomrule
\toprule
& $\frac{\partial^2P_\eta(|\beta|,\rho)}{\partial|\beta|^2}$ & $\frac{\partial^2P_\eta(|\beta|,\rho)}{\partial|\beta| \, \partial \rho}$ \\
\midrule
Power Family $\eta \in (0,2]$ & $\rho \eta(\eta-1) |\beta|^{\eta-2}$ & $\eta |\beta|^{\eta-1}$ \\
Elastic Net, $\eta \in [1,2]$ & $\rho(\eta-1)$ & $(\eta-1)|\beta|+(2-\eta)$ \\
Log, $\eta>0$ & $-\rho(\eta+|\beta|)^{-2}$ & $(\eta+|\beta|)^{-1}$ \\
Continuous Log &  $- \rho(\sqrt{\rho} + |\beta|)^{-2}$ & $(\sqrt \rho + \beta)^{-1} - \frac 12 \sqrt \rho (\sqrt \rho + |\beta|)^{-2}$ \\
SCAD, $\eta>2$ & $- (\eta-1)^{-1} 1_{\{|\beta|\in [\rho,\eta\rho)\}}$ & $1_{\{|\beta|<\rho\}} + \eta(\eta-1)^{-1} 1_{\{|\beta|\in [\rho,\eta\rho)\}}$  \\
MC+, $\eta\ge 0$ &  $-\eta^{-1} 1_{\{|\beta|<\rho\eta\}}$ & $1_{\{|\beta|<\rho\eta\}}$  \\
\bottomrule
\end{tabular}
}
\end{center}
\caption{Some commonly used penalty functions and their derivatives.}
\label{table:pen}
\end{table}

\subsection*{Thresholding Formula for Least Squares with Orthogonal Design}

We drop subscript $j$ henceforth to prevent clutter.
\begin{enumerate}
\item For the DP$(\eta)$ penalty, the objective function becomes
\begin{align*}
    \frac{a}{2} (\beta - b)^2 + \rho \ln (\eta + |\beta|).
\end{align*}
Setting derivative to 0, we find that the optimal solution is given by
\begin{align*}
    \hat \beta(\rho)
    &= \begin{cases}
    \frac{(|b|-\eta) + [(|b|+\eta)^2-4\rho/a]^{1/2}}{2} \text{sgn}(b) & \rho \in [0, |ab\eta|]  \\
    0 \text{ or } \frac{(|b|-\eta) + [(|b|+\eta)^2-4\rho/a]^{1/2}}{2} \text{sgn}(b) & \rho \in (|ab\eta|, a(\eta+|b|)^2/4) \\
    0 & \rho \in [a(\eta+|b|)^2/4,\infty)
    \end{cases}.
\end{align*}
The ambiguous case reflects the difficulty with non-convex minimization and has to be resolved by comparing the objective function values at the two points. Moving from one local minimum at $0$ to the other non-zero one results in a jump somewhere in the interval $[|ab\eta|, a(\eta+|b|)^2/4]$. Note that at $\rho = |ab\eta|$, $\beta(\rho) = [(|b|-\eta) + ||b|-\eta|]/2$,  which is zero when $\eta \ge |b|$. Therefore, the path is continuous whenever $\eta \ge |b|$.

\item For the continuous DP penalty, the objective function is
\begin{align*}
    \frac{a}{2} (\beta - b)^2 + \rho \ln (\sqrt{\rho} + |\beta|)
\end{align*}
with the solution
\begin{align*}
    \hat \beta(\rho) &= \frac{(|b|-\sqrt{\rho}) + [(|b|+\sqrt \rho)^2 - 4\rho/a]^{1/2}}{2} \text{sgn}(b)
\end{align*}
for $\rho \le a^2 b^2$. Note at $\rho=a^2b^2$, $\hat \beta(\rho) = [(b-ab)+|b-ab|]/2$, which is 0 when $a \ge 1$. Therefore, when $a \ge 1$, the solution path is continuous
\begin{align*}
    \hat \beta(\rho) &= \begin{cases} \frac{(|b|-\sqrt{\rho}) + [(|b|+\sqrt \rho)^2 - 4\rho/a]^{1/2}}{2} \text{sgn}(b) & \rho \in [0,a^2b^2]  \\
    0 & \rho \in [a^2b^2,\infty)
    \end{cases}.
\end{align*}
When $a<1$, the solution path is given by
\begin{align*}
    \hat \beta(\rho) &= \begin{cases} \frac{(|b|-\sqrt{\rho}) + [(|b|+\sqrt \rho)^2 - 4\rho/a]^{1/2}}{2} \text{sgn}(b) & \rho \in [0,\rho^*]  \\
    0 & \rho \in [\rho^*,\infty)
    \end{cases}
\end{align*}
where $\rho^* > a^2b^2$ indicates the discontinuity point and shall be determined numerically.
\item For power family, the objective function is
\begin{align*}
    \frac{a}{2} (\beta-b)^2 + \rho|\beta|^\eta.
\end{align*}
For $\eta \in (0,1)$, the solution $\hat \beta(\rho)$ is the unique root of the estimating equation
\begin{align*}
    a (\beta-b) + \rho \eta |\beta|^{\eta-1} \text{sgn}(\beta) = 0
\end{align*}
or 0, whichever gives a smaller objective value. For $\eta=1$, it reduces to the well-known soft thresholding operator for lasso $\hat \beta(\rho) = \text{median}\{\pm \rho/a + b, 0\}$. For the convex case $\eta \in (1,2]$, the solution $\hat \beta(\rho)$ is always the (nonzero) root of the estimating equation, i.e., no thresholding; only shrinkage occurs.

\item For elastic net, the objective function is
\begin{align*}
    \frac{a}{2} (\beta-b)^2 + \rho[(\eta-1) \beta^2/2 + (2-\eta)|\beta|]
\end{align*}
with a continuous path solution
\begin{align*}
    \hat \beta(\rho) = \text{median}\left\{ 0, \frac{ab \pm \rho(2-\eta)}{a+\rho(\eta-1)} \right\}.
\end{align*}
Again the lasso soft thresholding is recovered at $\eta=1$ and ridge shrinkage is achieved at $\eta=2$.

\item For SCAD, the minimum of the penalized objective function over $[0,\rho]$ is
\begin{align*}
    \hat \beta_1(\rho) = \text{sgn}(b) \min\{r,u_1\} 1_{r >0},
\end{align*}
where $r = (a|b|-\rho)/a$ and $u_1 = \min \{\rho,|b|\}$, and the minimum over $[\rho,\eta \rho]$ is
\begin{align*}
    \hat \beta_2(\rho) = \text{sgn}(b) \cdot \begin{cases}
    \rho 1_{\{2r>\rho+u_2\}} + u_2 1_{\{2r<\rho+u_2\}} & a(\eta-1)<1 \\
    \rho 1_{\{\eta \rho \ge |b|\}} + u_2 1_{\{\eta \rho < |b|\}} & a(\eta-1)=1  \\
    \rho 1_{\{r \le \rho\}} + r1_{\{r \in [\rho,u_2]\}} + u_2 1_{\{r>u_2\}} & a(\eta-1)>1
    \end{cases},
\end{align*}
where $u_2 = \min\{\eta \rho,|b|\}$ and $r=[ab(\eta-1)-\eta \rho]/[a(\eta-1)-1]$. When $|b|\le \rho$, the solution is $\hat \beta_1(\rho)$. When $|b| \in (\rho,\eta \rho]$, the solution is either $\hat \beta_1(\rho)$ or $\hat \beta_2(\rho)$, whichever gives the smaller penalized objective value. When $|b|>\eta \rho$, the solution is $\hat \beta_1(\rho)$, $\hat \beta_2(\rho)$ or $\hat \beta_3(\rho)=|b|$, whichever gives the smallest penalized objective value.
\item For MC+, the path solution is either
\begin{align*}
    \hat \beta(\rho) = \text{sgn}(b) \cdot \begin{cases}
    b^* 1_{\{2r<b^*\}} & a \eta <1  \\
    b^* 1_{\{\rho<a|b|\}} & a\eta = 1\\
    \min \{r,b^*\} 1_{\{r>0\}} & a\eta > 1
    \end{cases},
\end{align*}
where $b^* = \min \{\rho \eta, |b|\}$ and $r = - (\rho-a|b|)/(a-\eta^{-1})$, or $\hat \beta(\rho) = b$, whichever gives a smaller penalized objective value.
\end{enumerate}

\subsection*{Numerical Comparisons}

Figure \ref{fig:glmnet-timing} displays the run times of lasso penalized GLM by the GLMNet package in {\sc R} \citep{FriedmanHastieTibshirani10GLMCD}, which is the state-of-the-art method for calculating the solution paths of GLM model with enet penalties. It applies coordinate descent algorithm to a sequence of tuning parameters with warm start. The simulation setup is same as in Section \ref{sec:GLM-n200-p10000} and the top 100 predictors are requested from the path algorithm using its default setting. In general GLMNet shows shorter run times than the ODE path algorithm (last column of Figure \ref{fig:n200-p10000-timing}). However a major difference is that GLMNet only computes the solution at a finite number of tuning parameter values (100 by default), while the ODE solver smoothly tracks the whole solution path.

\begin{figure}[!ht]
\begin{center}
$$
\begin{array}{c}
\includegraphics[width=4.5in]{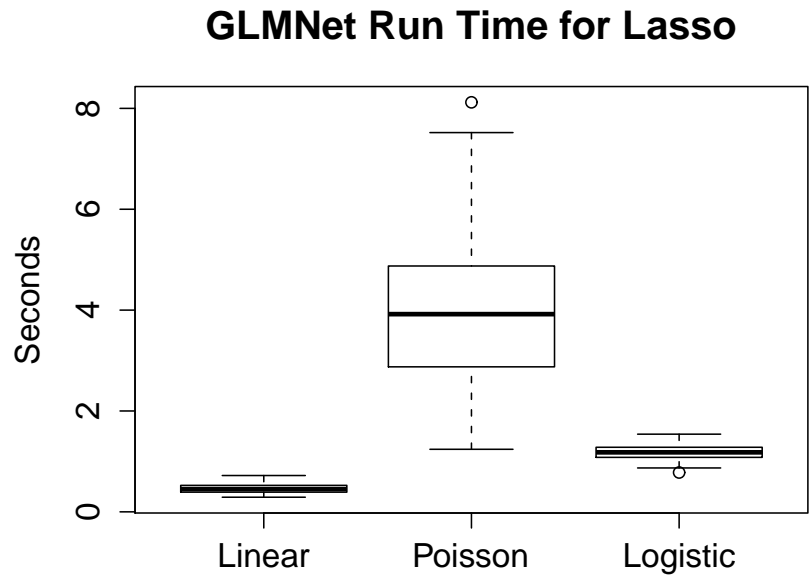}
\end{array}
$$
\vspace{-.5in}
\caption{Run times of GLMNet for the lasso problems from 100 replicates. Problem size is $n=200$ and $p=10,000$. Top 100 predictors are requested.}
\label{fig:glmnet-timing}
\end{center}
\end{figure}

\end{document}